\documentclass[11pt]{article}
\usepackage{epsfig}
\usepackage{latexsym}
\usepackage{amsmath}
\usepackage{verbatim}
\usepackage{enumerate}
\usepackage{setspace}
\usepackage{comment}
\usepackage{anysize}
\usepackage{epic}
\usepackage{eepic}
\usepackage{xcolor}
\usepackage{booktabs}

\usepackage[all,color,frame,import]{xy}
\newcommand{\tab}{\hspace{.25in}}

\newtheorem{theorem}{Theorem}[section]
\newtheorem{lemma}{Lemma}[section]
\newtheorem{proposition}{Proposition}[section]

\newtheorem{observation}{Observation}[section]

\newcommand\QED{\ifhmode\allowbreak\else\nobreak\fi
\quad\nobreak$\Box$\medbreak}

\newcommand{\proofstart}{\par\noindent \emph{Proof:} }
\newcommand{\proofend}{\QED\par}
\newenvironment{proof}{\proofstart}{\proofend}

\usepackage[letterpaper,hmargin=1in,vmargin=1in]{geometry}

\bibliographystyle{habbrv}


\newcommand{\peqnp}{\mbox{{P}~$=$~{NP}}}

\long\gdef\boxit#1{\vspace{5mm}\begingroup\vbox{\hrule\hbox{\vrule\kern5pt
\vbox{\kern5pt#1\kern5pt}\kern0pt\vrule}\hrule}\endgroup}

\begin{document}

\title{Speedup in the Traveling Repairman Problem with Unit Time Windows}
\author{Greg N. Frederickson \and Barry Wittman}

\maketitle

\begin{abstract}
The input to the unrooted traveling repairman problem is an undirected metric graph and a subset of nodes, each of which has a time window of unit length.  Given that a repairman can start at any location, the goal is to plan a route that visits as many nodes as possible during their respective time windows.  A polynomial-time bicriteria approximation algorithm is presented for this problem, gaining an increased fraction of repairman visits for increased speedup of repairman motion.
For speedup $s$, we find a $6\gamma/(s + 1)$-approximation for $s$ in the range $1 \leq s \leq 2$ and a $4\gamma/s$-approximation for $s$ in the range $2 \leq s \leq 4$, where $\gamma = 1$ on tree-shaped networks and $\gamma = 2 + \epsilon$ on general metric graphs.
\end{abstract}

\section{Introduction}

When planning the route for an agent who needs to drop off supplies, make repairs, deliver important information, or other similar tasks, distance traveled may not be the only important metric.  The usefulness of visiting a particular location may be heavily dependent on time.  Consider the example of a client who schedules a window of time during which he or she will be home.  An appliance repairman will be able to perform repairs only during that time window.

In the last decade, the algorithms community has achieved much progress with time-sensitive routing problems of this kind.  These problems typically identify the locations to be visited and the cost of traveling between them as the nodes and edges, respectively, of a weighted, undirected graph.  For example, the orienteering problem \cite{Arkin, Bansal, Blum3, Chekuri2, Chen} seeks to find a path that visits as many nodes as possible before a global time deadline.  The deadline traveling salesman problem \cite{Bansal} generalizes this problem further by allowing each location to have its own deadline.  This problem can be generalized even further to the traveling repairman problem \cite{Bansal,Bar-Yehuda,Chekuri,Frederickson3,Karuno3,Tsitsiklis} by allowing each location to have a time window during which it must be visited to receive credit.

In \cite{Frederickson3}, we introduced the first polynomial-time algorithms that give constant approximations to the traveling repairman problem with unit-time windows whenever the underlying graph is a tree or a metric graph.  In this version of the problem, a time window is identified with each {\em service request} whose location is given by the node with which it is associated.  Let a repairman earn a specified \emph{profit} for visiting a service request during its time window.  A planned sequence of visits made by traveling at a given speed is called a \emph{service run}.  The goal of the traveling repairman problem is to plan a service run that maximizes profit.  Unlike much of the preceding literature \cite{Bansal, Bar-Yehuda}, we considered the unrooted version of the problem, in which the repairman may start at any time from any location and stop similarly.  Although the unrooted problem is no harder than the rooted problem, which specifies a starting point, it is a fascinating and difficult problem in its own right.  Both problems are NP-hard when the graph is a tree \cite{Frederickson3} and APX-hard for a general metric graph \cite{Frederickson6}.

As a counterpoint to the repairman problem, we also introduced the speeding deliveryman problem in \cite{Frederickson3, Frederickson6}, with an alternative optimization paradigm, namely speedup.  The input to the speeding deliveryman problem is the same as the input to the traveling repairman problem, but the goal is to find the minimum speed necessary to visit \emph{all} service requests during their time windows and thus collect all profit.
The deliveryman problem is similarly hard,
and in \cite{Frederickson3, Frederickson6} we introduced  polynomial-time approximation algorithms for it.

In both the repairman and deliveryman problems, our algorithms from \cite{Frederickson3, Frederickson6} 
rely on \emph{trimming} windows so that the resulting time windows are pairwise either identical or non-overlapping.  To trim time windows, we first make divisions in time every $.5$ time units, starting at time 0.  We call the time interval that starts at a particular division and continues up to the next division a \emph{period}.  Because we allow no window to start on a period boundary, each time window will completely overlap exactly one period and partially overlap the two neighboring periods.  Trimming then removes those parts of each window that fall outside of the completely overlapped period.

This process of trimming incurs a penalty in our approximations that is a reduction by a factor of $1/3$ in the number of requests serviced by the repairman and an increase by a factor of 4 in the speedup needed by the deliveryman.
Yet one might expect that a spectrum of performance is possible between these two extremes.
If we have any particular speedup $s$ greater than 1 but less than 4,
we expect an increase in the number of serviced requests,
proportionate in some sense to $s$.
Indeed, this is so, as we show in this paper.

The concept of speedup has been used in the scheduling community since its introduction in \cite{Kalyanasundaram}.  Subsequent work in \cite{Phillips} refined the technique and gave it the name \emph{resource augmentation} as well as notation suitable for scheduling.  Results in \cite{Bansal2} have made notable strides by producing $O(1)$-speed $O(1)$-approximation algorithms for some kinds of scheduling problems.  In other words, by increasing the speed of the machines on which the jobs are run by a constant factor, the resulting schedule can achieve a measure of performance within a constant factor of the optimal schedule run on machines at a baseline speed.  Although we do not use the same techniques or notation given in these papers, we note the connection between scheduling and time-constrained routing and the value of resource augmentation when finding approximation algorithms for each class of problem.

For unit-time windows
we present an algorithm that finds approximations parameterized by speedup $s$ and a factor $\gamma$, where $\gamma = 1$ for a tree and $\gamma$ is no more than $2 + \epsilon$ for a metric graph.  These values for $\gamma$ are explained in Sect.~\ref{section:algorithms for repairman}.
For $s$ in the range $1 \leq s \leq 2$, our algorithms find a $6\gamma/(s + 1)$-approximation.
For $s$ in the range $2 \leq s \leq 4$, our algorithms find a $4\gamma/s$-approximation.
As long as unit speed is no greater than the speed needed for an optimal service run to collect all possible profit, our results will hold.  Were the unit speed considerably faster than is needed to service all requests, the idea of offsetting the sub-optimality of an approximation by increasing speed beyond that of an optimal run becomes meaningless.

Our analysis depends upon the interactions between speedup and trimming, but setting the starting point for partitioning periods at time 0 as we did in \cite{Frederickson3} may not result in the most profitable run over trimmed windows.  Especially in the case of non-integer speedups, careful study suggests that some particular starting point for periods may be better than others.  If we repeatedly choose different starting points, run our repairman algorithm on trimmed windows, and always keep the best answer found so far, we can establish better bounds on the final approximation.  Although there are an infinite number of points to start partitioning periods from, we can find a \emph{canonical set} of starting points that are representative of all possible starting points, and the size of this set is linear in the number of service requests.

Then, for each set of periods we use, we trim the windows and run a basic repairman algorithm.  For each set of trimmed windows, our analysis depends on an \emph{ensemble} of runs, each of which moves backward and forward along the path of an optimal run.  The ensemble is composed of a number of runs traveling at a speedup $s$ faster than an optimal run travels.  While it is too expensive to determine an actual optimal run unless \peqnp, our analysis will rely only on a definition of each run in the ensemble in terms of the optimal run.  One run in the ensemble might be effective in servicing requests serviced by an optimal run very early in their respective windows.  Another run might be effective in servicing requests serviced by an optimal run very late in their respective windows.  Together, these runs would service every request on trimmed windows that an optimal run would service on untrimmed windows.  An optimal run on trimmed windows, then, would achieve at least the average service of the ensemble of runs taken as a whole.  Likewise, the best run from among all possible choices of trimming would achieve at least the average service taken over all runs in the ensemble.

Except for the factor of $\gamma$ that does not come from trimming, our algorithm finds a run which performs as well as this average.  
Our analysis of an ensemble guarantees that a certain fraction of profit is always collected by the run produced by our algorithm.  Again, we know the ensemble of runs only symbolically, because we cannot efficiently know the instantiation of these runs unless \peqnp.  The challenge has been to devise ensembles so that our analysis will establish a good performance for our algorithm.

Although we are the first to extend time window problems into the realm of speedup, much other related work is being done with time windows.
For general metric spaces and general time windows together in the rooted problem, 
an $O(\log^2 n)$-approximation is given in \cite{Bansal}.  An $O(\log L)$-approximation is given in \cite{Chekuri3}, for the case that all time window start and end times are integers, where $L$ is the length of the longest time window.  In constrast, a constant approximation is given in \cite{Chekuri},
but only when there are a constant number of different time windows.  Following the initial publication of our work in \cite{Frederickson3}, an extension was given in \cite{Chekuri3} that gives an $O(\log D)$-approximation to the unrooted problem with general time windows, where $D$ is the ratio of the length of largest time window to the length of the smallest.  Polylogarithmic approximation algorithms to the directed traveling salesman problem with time windows have been given in \cite{Chekuri2} and \cite{Nagarajan2}.
The repairman problem with time windows has also been studied in the operations research community,
as in \cite{Focacci1} and \cite{Focacci2},
where it is exhaustively solved to optimality.

Note that the bicriteria approximation that we give will guarantee the repairman a greater fraction of the optimal profit by speeding up beyond the speed of the optimal service run.  Although we are the first to identify techniques that allow for a trade-off between requests serviced and speedup, there are other bicriteria approximations for repairman problems.  For example, a bicriteria approximation is presented in \cite{Bansal} that allows the repairman to service more requests if the time windows are all lengthened by some amount $\epsilon$.  If the time window for request $s_i$ starts at time $t$ and ends at time $t'$, this lengthening would change each time window from $[t, t')$ to $[t, (1 + \epsilon)t')$.  In contrast, the effect of running at speedup $s$ is equivalent to increasing all times by some factor, changing time window $[t, t')$ to $[st, st')$.

Before we explain our bicriteria algorithms in depth, we will summarize in Sect.~\ref{section:algorithms for repairman} our results from \cite{Frederickson3} as well as recent improvements.  In Sect.~\ref{section:ensemble}, we will explain the basic mechanisms needed for the averaging analysis used in our ensemble approach.  In Sect.~\ref{section:canonical}, we will introduce the concept of canonical sets needed to make our algorithms practical for arbitrary values of speedup $s$.  In Sect.~\ref{section:2<s<4}, we will give the analysis that proves our approximation results for speedup $s$ in the range $2 \leq s \leq 4$.  Finally, in Sect.~\ref{section:1<s<2}, we will give the more involved analysis proving approximation results for speedup $s$ in the range $1 \leq s \leq 2$.

\section{Algorithms for Repairman without Speedup}
\label{section:algorithms for repairman}

We continue to restrict our focus to the traveling repairman problem with unit-time windows.  For this problem, our repairman approximation algorithms in \cite{Frederickson3, Frederickson6} trim the time windows so that the resulting windows can be partitioned into non-overlapping periods.  These algorithms identify a variety of good paths within each separate period and then use dynamic programming to select and paste these paths together into a variety of longer good paths and ultimately a good service run for the whole problem.  Optimal paths within each period can be found in polynomial time when the underlying graph is a tree, but only approximately optimal paths are found on a metric graph.

To unify the notation between these approaches, we describe the approximation factor for finding paths within each period using $\gamma$, where $\gamma = 1$ for a tree and $\gamma = 2 + \epsilon$ for a metric graph.  Results from \cite{Frederickson3} clearly show that $\gamma \leq 5$ for a metric graph, but this result has been improved to $2 + \epsilon$ because of results in \cite{Chekuri2}.  To describe the running time for these repairman algorithms we bound the time by $\Gamma(n)$, where $\Gamma(n)$ is $O(n^4)$ for a tree and $O(n^{O(1/\epsilon^2)})$ for a metric graph, where the tree bound comes from results in \cite{Frederickson3} and the metric graph bound can be derived by augmenting those results with techniques from \cite{Chekuri2}.  A full explanation of this improved result is available in \cite{Frederickson6}.

\section{The Ensemble Approach for Analyzing Performance}
\label{section:ensemble}

The two results cited above give analysis for runs after trimming for the cases of no speedup ($s = 1$ guaranteeing $1/(3\gamma)$ of optimal profit) and full speedup ($s = 4$ guaranteeing $1/\gamma$ of optimal profit achieved at the original speed).  Our analysis cannot examine speeds lower than the reference speed of 1 (i.e, slowdown) because there is no guarantee that a slower run can ever earn any constant fraction of the profit of a faster run.  Likewise, above a speedup of 4, the effects of trimming are completely offset.    Thus, this range of speedups is the only range for which our bicriteria analysis seems practical.
For this range $1 < s < 4$, our analysis uses a number of different service runs that are defined by moving backward and forward along an optimal tour $R^*$.  We rely on averaging over a suitable ensemble of runs to establish that the run $R$ generated by our algorithm does well.

Let $R^*$ be an optimal service run at unit speed originally starting at time 0.
In  defining the service runs that we use in our analysis,
we use the term \emph{racing} to describe movement, forward and backward, along $R^*$ at a speedup of $s$ times the speed at which $R^*$ travels.  Please refer to Figure~\ref{figure:s=2} for an example of runs racing at $s = 2$.  For convenience, let $\delta = 1/(2s)$, which is the time needed to travel at speed $s$ the distance traveled in one period at unit speed.

One of the runs over which we average is service run $A$, defined as follows.  Start run $A$ at time $0$ at the location that $R^*$ has at time $-.5$.  Then run $A$ follows a pattern of racing forward along $R^*$ for $1$ period, racing backward along $R^*$ for $1 - 2\delta$ periods, and then racing forward along $R^*$ for $2\delta$ periods.  The positions reached by run $A$ at the times $0$, $.5$, $1$, $1.5$, and $2$ are indicated in Figure~\ref{figure:s=2}.  Note that the pattern of movement for run $A$ repeats every 2 periods.

Define $A^R$, the ``\emph{reverse}'' of $A$, as follows.  Start run $A^R$ at time $0$ at the location that $R^*$ has at time $.5$.  Then run $A^R$ follows a repeating pattern of racing forward along $R^*$ for $2\delta$ periods, racing backward along $R^*$ for $1 - 2\delta$ periods, and then racing forward along $R^*$ for $1$ period.  Similarly, the positions reached by run $A^R$ at times $0$, $.5$, $1$, $1.5$, and $2$ are indicated in the same figure.

If a service request $p$ is serviced by a run $R$ during the period that the time window of $p$ has been trimmed into, we say that $R$ \emph{covers} $p$.  We partition the set of requests whose windows are trimmed into a given period into three sets: Set $T$ contains those requests serviced by $R^*$ in that same period, set $E$ contains those requests serviced by $R^*$ in the preceding period, and set $L$ contains those requests serviced by $R^*$ in the following period.  See Figure~\ref{figure:s=2} for an example: Run $A$ services requests in $L$ for each period, and it services requests in $T$ for every second period.  Note, for example, that requests serviced on $R^*$ between times $0$ and $.5$ are serviced by $A$ during that same period.

\begin{figure}[!hbt]
\centering
\begin{xy}
\xyimport(397, 155){\includegraphics[width=.75\textwidth]{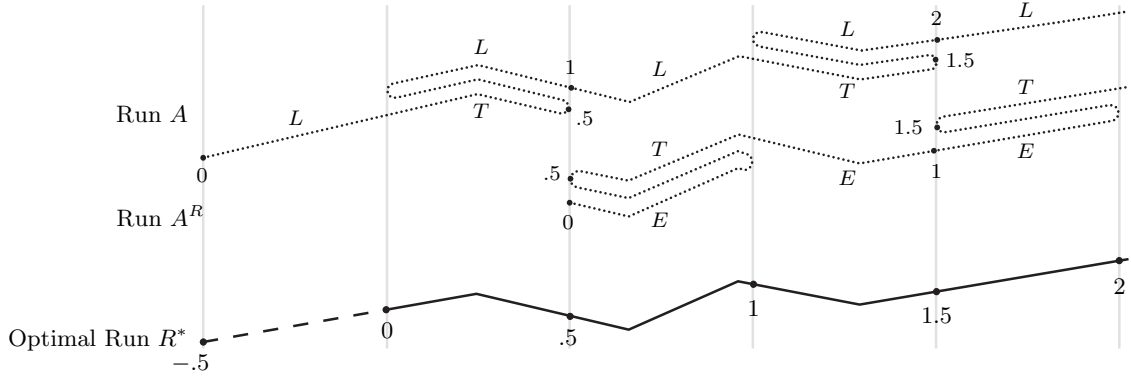}}
(2,60)*!R\txt\footnotesize{Run $A^R$};
(-5,4)*!R\txt\footnotesize{Optimal Run $R^*$};
(-5,106)*!R\txt\footnotesize{Run $A$};
(41,104)*\txt\scriptsize{$L$};
(120,107)*\txt\scriptsize{$T$};
(120,136)*\txt\scriptsize{$L$};
(196,126)*\txt\scriptsize{$L$};
(276,116)*\txt\scriptsize{$T$};
(276,144)*\txt\scriptsize{$L$};
(352,153)*\txt\scriptsize{$L$};
(196,58)*\txt\scriptsize{$E$};
(196,90)*\txt\scriptsize{$T$};
(276,77)*\txt\scriptsize{$E$};
(352,89)*\txt\scriptsize{$E$};
(352,118)*\txt\scriptsize{$T$};
(80,8)*\txt\footnotesize{$0$};
(157,5)*\txt\footnotesize{$.5$};
(236,18)*\txt\footnotesize{$1$};
(314,15)*\txt\footnotesize{$1.5$};
(-4,-7)*\txt\footnotesize{$-.5$};
(392,28)*\txt\footnotesize{$2$};
(1,78)*\txt\scriptsize{$0$};
(164,104)*\txt\scriptsize{$.5$};
(158,127)*\txt\scriptsize{$1$};
(324,130)*\txt\scriptsize{$1.5$};
(314,149)*\txt\scriptsize{$2$};
(157,57)*\txt\scriptsize{$0$};
(150,79)*\txt\scriptsize{$.5$};
(314,80)*\txt\scriptsize{$1$};
(302,100)*\txt\scriptsize{$1.5$};
\end{xy}
\caption{Runs $A$ and $A^R$ with a speedup of 2 based on an optimal run.  The pattern in $A^R$ perfectly complements $A$ to give full coverage.  Times are labeled on the optimal run as well as runs $A$ and $A^R$.  Segments of each run are also designated $L$, $T$, and $E$ depending on which subset of runs they add coverage to.  The light vertical lines indicate shared positions on the runs, not identical times.}
\label{figure:s=2}
\end{figure}

We number the periods $-1$, $0$, $1$, $2$, and so on by the integer multiples of $.5$ that give their starting times.  The labels $L$, $T$, and $E$ in Figure \ref{figure:s=2} make it clear that every period numbered 1 and later has full coverage.  The combination of runs $A$ and $A^R$ ensures that the location of a given request is visited three times: during the same period, the previous period, and the following period that the request was visited by $R^*$.  Thus, we guarantee that our combination of runs visits the request no matter which period its window was trimmed into.  However, it appears that periods $-1$ and $0$ only have partial coverage.  This apparent discrepancy is an endpoint anomaly.  In fact, period $-1$ needs no coverage because no locations visited by $R^*$ are visited in this period.  Likewise, there are no requests in set $E$ serviced by $R^*$ during period 0 because such requests would have been trimmed into period $-1$, which is impossible.

Although there might be some superficial resemblance between the $L$, $T$, and $E$ sets and the small margin and large margin cases discussed in \cite{Bansal}, the approach in that paper iteratively runs specialized algorithms that progressively redefine the meaning of small margins and large margins.  In contrast, the three sets that we use are fixed for any given trimming scheme, do not have separate algorithms tailored for each one, and are used purely for accounting.

Let $S$ be a subset of service requests.  Define the \emph{coverage} of $S$ by a run $R$, written $cover_R(S)$, to be the number of requests in $S$ satisfied by $R$ divided by the total number of requests in $S$.  Define the coverage of $S$ by a set $U$ of runs, written $cover_U(S)$, to be the average of $cover_R(S)$ for every run $R \in U$.

\begin{proposition}[Average Coverage] \label{proposition:averaging}  Let $\{S_1, S_2, S_3, ... S_a\}$ be a collection of sets of service requests such that $\bigcup_i S_i$ gives all the requests serviced by $R^*$ on untrimmed windows.  Let $U$ be a set of service runs.  If $\min_i\{cover_U(S_i)\} = \mu$, then there is at least one service run $\hat{R} \in U$ such that profit$(\hat{R}) \geq \mu \cdot$profit$( R^* )$.
\end{proposition}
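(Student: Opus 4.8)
The plan is to use an averaging (pigeonhole) argument: rather than exhibiting a good run directly, I would show that the \emph{average} profit over all runs in $U$ is already at least $\mu\cdot\mathrm{profit}(R^*)$, and then invoke the elementary fact that some single run must do at least as well as the mean. Concretely, writing $m=|U|$, it suffices to prove
\[
\frac{1}{m}\sum_{R\in U}\mathrm{profit}(R)\;\ge\;\mu\cdot\mathrm{profit}(R^*),
\]
since $\max_{R\in U}\mathrm{profit}(R)$ is at least the average on the left; whichever run attains that maximum can be taken as $\hat R$.

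To bound the left-hand side I would decompose profit along the sets $S_i$. Since $\bigcup_i S_i$ is exactly the set of requests served by $R^*$, and these sets partition that set, we have $\mathrm{profit}(R^*)=\sum_i \mathrm{profit}(S_i)$, where $\mathrm{profit}(S_i)$ denotes the total profit of the requests in $S_i$. For each individual run $R$, I would lower-bound $\mathrm{profit}(R)$ by the total profit it collects on requests lying in some $S_i$ (any further requests it covers outside $\bigcup_i S_i$ only increase $\mathrm{profit}(R)$). Averaging this bound over $R\in U$ and swapping the order of summation so that the outer sum runs over $i$, the inner average over $R$ is precisely $\mathrm{profit}(S_i)\cdot\mathrm{cover}_U(S_i)$, giving $\frac1m\sum_{R\in U}\mathrm{profit}(R)\ge\sum_i \mathrm{profit}(S_i)\,\mathrm{cover}_U(S_i)$. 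Finally, replacing each factor $\mathrm{cover}_U(S_i)$ by the lower bound $\mu=\min_i\mathrm{cover}_U(S_i)$ and pulling it out yields $\mu\sum_i\mathrm{profit}(S_i)=\mu\cdot\mathrm{profit}(R^*)$, as required.

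The step I expect to be delicate — and the one the statement genuinely rests on — is the no-double-counting needed in the decomposition. Both the identity $\mathrm{profit}(R^*)=\sum_i\mathrm{profit}(S_i)$ and the per-run bound $\mathrm{profit}(R)\ge\sum_i(\text{profit }R\text{ earns in }S_i)$ require the $S_i$ to be (essentially) disjoint, as they are when they arise as the $T$, $E$, and $L$ sets across distinct periods. If the $S_i$ overlapped, a request lying in several sets would be counted repeatedly on the $R^*$ side while a run covering only the shared part is credited once, and the claimed inequality can fail; so I would make the partition hypothesis explicit. A second, more cosmetic point is to reconcile the count-based definition of $\mathrm{cover}_R(S)$ with the profit-based conclusion: the swap-of-summation step goes through verbatim once coverage is read as profit-weighted (equivalently, when all requests carry equal profit, so that ``number of requests'' and ``profit'' coincide). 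Once disjointness and this weighting convention are fixed, every line above is a routine manipulation of finite sums.
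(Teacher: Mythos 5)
Your proof is correct and is, in essence, the same averaging/pigeonhole argument the paper gives: the paper offers only an informal one-paragraph justification (some run must do at least as well as the worst average coverage, else the averages would contradict the minimum), and your write-up is the rigorous version of exactly that, via $\max_R \mathrm{profit}(R) \geq \frac{1}{|U|}\sum_R \mathrm{profit}(R) \geq \sum_i \mathrm{profit}(S_i)\,cover_U(S_i) \geq \mu\,\mathrm{profit}(R^*)$.

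The one point where you genuinely depart from the paper is worth keeping: the paper's accompanying intuition explicitly allows the $S_i$ to overlap (``some of which may overlap, but the union of all the subsets is the original set of requests''), whereas your proof correctly requires them to be essentially disjoint. You are right and the paper's informal claim is too strong: with, say, requests $\{a,b,c\}$, sets $S_1=\{a,b\}$ and $S_2=\{b,c\}$, and $U$ consisting of runs that each service only $b$, one gets $\mu = 1/2$ yet no run earns half of $\mathrm{profit}(R^*)=3$. The proposition survives in the paper only because every application uses the pairwise-disjoint $L$, $T$, $E$ subsets indexed by period. Your second caveat (reconciling count-based coverage with profit, i.e.\ unit profits or profit-weighted coverage) is likewise a real hypothesis the paper leaves implicit. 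So: same approach, but your explicit disjointness hypothesis is a necessary correction rather than mere pedantry.
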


The Average Coverage Proposition formalizes the following intuition.  Let a group of service runs achieve some coverage over a set of requests.  Let us also say that we have divided those requests into many different subsets, some of which may overlap, but the union of all the subsets is the original set of requests.  If we take the subset of requests with the worst average coverage, some service run in the group covers a fraction of total requests no smaller than that worst average coverage.  Otherwise, the average coverage of all subsets would be worse than the coverage of the worst covered subset, which is a contradiction.

Given a way of dividing requests into subsets, we wish to prove that some set of service runs achieves a particular lower bound on average coverage.  For the case of unit-length time windows, we divide requests into subsets based on membership in $L$, $T$, or $E$ sets (or smaller partitions of these three sets).  Since our dynamic programming approach finds an optimal run on trimmed windows, to within a factor of $\gamma$, we know that this run must perform at least as well as the bound we prove.

For analyzing $s = 2$ and $s = 3$, we use ensembles of runs.  We average the coverage over sets $L$, $T$, and $E$ by summing the lowest coverage for the requests in each set and dividing by the total number of runs used.  By the Average Coverage Proposition, this average gives a lower bound on the performance of an optimal run on trimmed intervals.

\begin{theorem}
\label{theorem:s=2}
Running a repairman algorithm from \cite{Frederickson3} at a speedup of $s = 2$ gives a $2\gamma$-approximation to repairman with unit-time windows in $\Gamma(n)$ time.
\end{theorem}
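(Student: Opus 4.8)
The plan is to instantiate the ensemble framework of Section~\ref{section:ensemble} at $s = 2$, where $\delta = 1/(2s) = 1/4$ and hence $1 - 2\delta = 2\delta = 1/2$, using the two-run ensemble $U = \{A, A^R\}$ already defined there. I would proceed in three stages: show that $A$ and $A^R$ together cover every request serviced by $R^*$; deduce from $|U| = 2$ that every subset of those requests has average coverage at least $1/2$; and convert this into a profit guarantee through the Average Coverage Proposition (Proposition~\ref{proposition:averaging}) and the $\gamma$-approximate dynamic program on trimmed windows.

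For the coverage fact I would verify the behavior indicated in Figure~\ref{figure:s=2}: if $R^*$ services a request during period $j$, then $A$ and $A^R$ together visit that request's location during each of the real periods $j - 1$, $j$, and $j + 1$. I would check this by computing, at the period boundaries, the point of $R^*$ occupied by each run under its racing schedule (for $A$, racing $1$, then $1 - 2\delta$, then $2\delta$ periods, and the reverse for $A^R$). These schedules keep $A$ within one period behind $R^*$ and $A^R$ within one period ahead, while each run's forward-and-back oscillation lets it also reach the current period on alternate periods; consequently some run reaches the stretch of $R^*$ belonging to period $j$ during each of the real periods $j - 1$, $j$, and $j + 1$. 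Since a unit window spans only three consecutive periods, trimming places the request into one of periods $j - 1$, $j$, $j + 1$, so in every case some run visits its location during its trimmed period and the request is covered. The apparent gaps in periods $-1$ and $0$ are the endpoint anomaly: $R^*$ visits no location during period $-1$, and the set $E$ is empty at period $0$, since its requests would have to be serviced by $R^*$ in period $-1$, before $R^*$ begins at time $0$.

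Given full coverage by the two runs, the averaging is immediate: for any subset $S$ of the requests serviced by $R^*$, every element of $S$ is covered by at least one of $A$, $A^R$, so $cover_A(S) + cover_{A^R}(S) \geq 1$ and therefore $cover_U(S) \geq 1/2$. Applying this to the collection $\{T, E, L\}$, whose union is exactly the set of requests serviced by $R^*$ on untrimmed windows (refining into the smaller partitions permitted by the proposition when profits are non-uniform, since the per-request bound survives any refinement), gives $\mu = \min_i cover_U(S_i) \geq 1/2$. The Average Coverage Proposition then produces a run $\hat R \in \{A, A^R\}$ with profit$(\hat R) \geq \tfrac{1}{2}\,$profit$(R^*)$. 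Because the repairman algorithm of \cite{Frederickson3}, run at speed $2$ on the trimmed windows, returns a run $R$ whose profit is within a factor $\gamma$ of the optimal trimmed run, and that optimum is at least profit$(\hat R)$, we obtain profit$(R) \geq \tfrac{1}{2\gamma}\,$profit$(R^*)$, the claimed $2\gamma$-approximation; the cost is a single invocation of the algorithm, namely $\Gamma(n)$ time.

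I expect the main obstacle to be the coverage fact of the second stage: pinning down, period by period, exactly where the racing schedules of $A$ and $A^R$ place them along $R^*$, and confirming that the two complementary runs leave no gap among the three trimming possibilities except at the boundary periods. Once that bookkeeping is settled, the $|U| = 2$ averaging and the factor $\gamma$ from the trimmed dynamic program follow with no further difficulty.
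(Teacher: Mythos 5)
Your proposal is correct and follows essentially the same route as the paper: the two-run ensemble $\{A, A^R\}$, the observation that together they cover every request of $R^*$ regardless of which of the three periods its window is trimmed into, the Average Coverage Proposition yielding $\mu \geq 1/2$, and the factor $\gamma$ from the trimmed-window dynamic program. The only cosmetic difference is that you replace the paper's six-subset table ($L$, $T$, $E$ split by parity) with the direct remark that full coverage by a union of two runs forces average coverage at least $1/2$ on every subset, which is the same calculation in streamlined form.
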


\begin{proof}
To analyze the performance of the resulting service run, we consider runs $A$ and $A^R$, as defined above.  Refer to Figure \ref{figure:s=2} for an example.
Run $A$ covers all of the requests in set $L$ and all of the requests in set $T$ which fall in even numbered periods with respect to $R^*$.  Run $A^R$ covers all of the requests in set $E$ and all of the requests in set $T$ which fall in odd numbered periods.  For each run, we record a 1 in each set for every period which is completed covered.  Note that $L_{\text{\emph{even}}}$ is that subset of requests in set $L$ that were serviced by $R^*$ in even numbered periods and $L_{\text{\emph{odd}}}$ is that subset of requests in set $L$ that were serviced by $R^*$ in odd numbered periods, and similarly for $T$ and $E$.  We sum these contributions into yields and then divide the yields by the total number of runs to get the average coverage for designation.  Table \ref{table:s=2} illustrates this process, in which we get an average coverage of 1/2.  Let $\hat{R}$ be the service run produced by the appropriate repairman algorithm from \cite{Frederickson3}.  By the Average Coverage Proposition, \emph{profit}$(\hat{R}) \geq \max\{$ \emph{profit}$(A)$, \emph{profit}$(A^R)\} \geq (1/2)$\emph{profit}$(R^*)/\gamma$, giving a $2\gamma$-approximation.\end{proof}

\begin{table}[!hbt]
\begin{center}
\begin{tabular}{ c c c c c c c c }
\toprule
 & & $L_{\text{\emph{even}}}$ & $L_{\text{\emph{odd}}}$ & $T_{\text{\emph{even}}}$ & $T_{\text{\emph{odd}}}$ & $E_{\text{\emph{even}}}$ & $E_{\text{\emph{odd}}}$\\
\toprule
Run $A$ & & 1 & 1 & 1 & 0 & 0 & 0 \\ 
\midrule
Run $A^R$ & & 0 & 0 & 0 & 1 & 1 & 1 \\ 
\toprule
Yield   & & 1 & 1 & 1 & 1 & 1 & 1\\ 
\toprule
Average Coverage & & 1/2 & 1/2 & 1/2 & 1/2 & 1/2 & 1/2 \\ 
\bottomrule
\end{tabular}
\end{center}
\caption{Averaging of runs $A$ and $A^R$ with speedup $s = 2$.}
\label{table:s=2}
\end{table}

\begin{theorem}
\label{theorem:s=3}
Running a repairman algorithm from \cite{Frederickson3} at a speedup of $s = 3$ finds a $4\gamma/3$-approximation to repairman with unit-time windows in $\Gamma(n)$ time.
\end{theorem}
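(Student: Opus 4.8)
\noindent\emph{Proof plan.} The plan is to mirror the proof of Theorem~\ref{theorem:s=2}, but to average over a larger ensemble. With $s = 3$ we have $\delta = 1/(2s) = 1/6$, and to obtain a $4\gamma/3$-approximation it suffices, by Proposition~\ref{proposition:averaging}, to exhibit an ensemble whose average coverage on every subset of requests is at least $3/4$. Two runs cannot achieve this, since with two runs every per-subset average is a multiple of $1/2$; the smallest ensemble that can yield a uniform average of $3/4$ has four runs. I would therefore use the two runs $A$ and $A^R$, defined exactly as before but now with $\delta = 1/6$, together with the two runs obtained from them by shifting the start one period (half a time unit) later. Call these shifted runs $A'$ and $A'^{R}$.

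Next I would partition the requests serviced by $R^*$ into the sets $L$, $T$, and $E$ and, as in Table~\ref{table:s=2}, split each according to the parity of the period in which $R^*$ provides the service, giving subsets such as $L_{\text{\emph{even}}}$, $T_{\text{\emph{even}}}$, $E_{\text{\emph{even}}}$ and their odd counterparts. The key point, which I would verify by tracking each run's position along $R^*$ period by period, is that at speedup $3$ the forward and backward racing phases are long enough that a single base run now sweeps an entire neighboring period within one period of time. Consequently each of $A$ and $A^R$ covers \emph{all} of $T$; moreover one of them covers all of $L$ together with half of $E$, while the other covers all of $E$ together with half of $L$, the ``half'' in each case being one parity class. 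A one-period shift preserves the fully covered sets ($T$, and the respective all-$L$ or all-$E$ set) but flips the parity class of the half-covered set, so $A'$ and $A'^{R}$ supply exactly the complementary halves. Recording a $1$ for each subset a run fully covers, every $T$ subset receives yield $4$, and every $L$ and $E$ subset receives yield $3$ (two base runs plus one shift), so dividing by the four runs gives an average coverage of $1$ on $T$ and $3/4$ on each $L$ and $E$ subset, for a minimum of $3/4$.

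Finally, with $\mu = 3/4$, Proposition~\ref{proposition:averaging} guarantees a run $\hat R$ in the ensemble with \emph{profit}$(\hat R) \geq (3/4)\,$\emph{profit}$(R^*)$. Since each ensemble run is a legitimate run on the trimmed windows, the run returned by the repairman algorithm of \cite{Frederickson3} is within a factor of $\gamma$ of optimal on trimmed windows and hence earns at least $(3/4)\,$\emph{profit}$(R^*)/\gamma$, a $4\gamma/3$-approximation, in $\Gamma(n)$ time. The endpoint anomaly for the first periods is handled exactly as in the $s = 2$ analysis. I expect the main obstacle to be the coverage bookkeeping of the middle paragraph: carefully confirming that speedup $3$ really does let each base run cover all of $T$ and all of one of $L$ or $E$ (rather than only half, as at $s = 2$), and that a one-period shift flips the parity of the half-covered set while leaving the fully covered sets intact, so that no subset drops below $3/4$.
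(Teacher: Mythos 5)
Your proposal is correct and follows essentially the same route as the paper: the paper also uses the four runs $A$, $A^R$, and their one-period shifts $\vec{A}$, $\vec{A}^R$, establishes exactly the coverage pattern you describe (each base run covers all of $T$ and all of one of $L$ or $E$ plus one parity half of the other, with the shift supplying the complementary half), and reads off yields of $4$ on the $T$ subsets and $3$ on the $L$ and $E$ subsets for a minimum average coverage of $3/4$. The coverage bookkeeping you flag as the main obstacle is likewise asserted rather than derived in detail in the paper, so there is no substantive difference.
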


\begin{proof}
To analyze the performance, we use four runs: $A$, $A^R$, $\vec{A}$, and $\vec{A}^R$, where $\vec{A}$ and $\vec{A}^R$ are shifted versions of $A$ and $A^R$.  Run $\vec{A}$ follows the same pattern as $A$ but starts the pattern at time $1/2$ at the location $R^*$ has at time 0.  Run $\vec{A}^R$ follows the same pattern as $A^R$ but starts the pattern at time $1/2$ at the location $R^*$ has at time $1$. Run $A$, now running with speedup $s = 3$, covers all of the requests in $L$ and $T$ but only half the requests in $E$.  Run $A^R$ covers all of the requests in $T$ and $E$ but only half the requests in $L$.  The shifted runs are copies of $A$ and $A^R$, covering all of $T$ and the other half of the requests of $E$ and $L$, respectively.
As before, we sum the contributions and divide by the total number of runs to find the average coverages shown in Table \ref{table:s=3}.  The worst coverage for any set, averaged over runs $A$, $A^R$, $\vec{A}$, and $\vec{A}^R$, is 3/4, corresponding to sets $L$ and $E$.  Let $\hat{R}$ be the service run produced by the appropriate repairman algorithm from \cite{Frederickson3}.  By the Average Coverage Proposition, \emph{profit}$(\hat{R}) \geq \max\{$ \emph{profit}$(A)$, \emph{profit}$(A^R)$, \emph{profit}$(\vec{A})$, \emph{profit}$(\vec{A}^R) \} \geq (3/4)$\emph{profit}$(R^*)/\gamma$, giving a $4\gamma/3$-approximation.\end{proof}

\begin{table}[!hbt]
\begin{center}
\begin{tabular}{ c c c c c c c c }
\toprule
 & & $L_{\text{\emph{even}}}$ & $L_{\text{\emph{odd}}}$ & $T_{\text{\emph{even}}}$ & $T_{\text{\emph{odd}}}$ & $E_{\text{\emph{even}}}$ & $E_{\text{\emph{odd}}}$\\
\toprule
 Run $A$ & & 1 & 1 & 1 & 1 & 1 & 0\\
\midrule
Run $\vec{A}$ & & 1 & 1 & 1 & 1 & 0 & 1\\  
\midrule
Run $A^R$ & & 0 & 1 & 1 & 1 & 1 & 1\\ 
\midrule
Run $\vec{A}^R$ & & 1 & 0 & 1 & 1 & 1 & 1\\ 
\toprule
Yield & & 3 & 3 & 4 & 4 & 3 & 3\\ 
\toprule
Average Coverage & & 3/4 & 3/4 & 1 & 1 & 3/4 & 3/4\\ 
\bottomrule
\end{tabular}
\end{center}
\caption{Averaging of runs $A$, $\vec{A}$, $A^R$, and $\vec{A}^R$ with speedup $s = 3$.}
\label{table:s=3}
\end{table}

\section{Canonical Collections and Their Use in Algorithms}
\label{section:canonical}

If speedup is a rational number, let $s = q/r$, where $q$ and $r$ are relatively prime.  Otherwise, if the speedup is an irrational number, we assume $r$ to be infinite.  Let $m$ be the number of requests.  Below we will define algorithm SPEEDUP such that it runs a repairman algorithm on each set of periods from among a collection of $\min\{r, m\}$ different sets of periods.  If $s$ is rational and $r \leq m$, we create sets of periods starting at $r$ time instants spread uniformly from time $0$ up to time $1/2$.   
If any window would start on any period or half period boundary,
we shift the periods by a small amount to avoid this case.
When $r>m$, SPEEDUP will use a canonical collection of sets of periods which is equally as effective as the $r$ sets (whenever $r$ is finite) but has size just linear in $m$.  The procedure CANONICAL defined below ``normalizes'' the beginnings of all the time windows into the range $(0, 1/2)$ and then picks a time between each consecutive pair of beginnings to be the starting point for a new set of periods.

\begin{table}[!hbt]
\begin{tabular}{l}
\smallskip \\
\toprule
\textbf{CANONICAL} \\
\midrule
\tab For time window $s_i$,\\
\tab \tab Let $t(s_i)$ be the starting time of $s_i$.\\
\tab \tab Let $a$ be the largest integer such that $a/2 < t(s_i)$.\\
\tab \tab Set $h_i$ to be $t(s_i) - a/2$.\\
\tab Sort the $h_i$ values, relabeling them $h_1, h_2, \ldots, h_m$ in increasing order.\\
\tab Let the first set of periods start at time 0.\\
\tab For $i$ from 2 to $m$,\\
\tab \tab Let the $i^{th}$ set of periods start at time $(h_{i - 1} + h_{i})/2$.\\
\bottomrule
\end{tabular}
\end{table}

\begin{lemma}
\label{lemma:m periods}
Any given set of periods is equivalent to one of the at most $m$ sets of periods
in the canonical collection, and this collection can be identified in $O(m \log m)$ time.
\end{lemma}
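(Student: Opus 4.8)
The plan is to establish the lemma in two parts: first, that the canonical collection correctly represents every possible set of periods (the \emph{equivalence} claim), and second, that it can be constructed within the stated time bound.

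**Equivalence of period sets.** First I would make precise what it means for two sets of periods to be \emph{equivalent}. Periods are determined by their starting offset within a half-period, so a set of periods starting at time $\tau$ is completely characterized by the fractional offset $\tau \bmod (1/2)$, which lives in the interval $[0, 1/2)$. What matters for trimming is not the exact offset but which period each time window is trimmed into; that is, the relative position of each window-start $t(s_i)$ with respect to the nearest period boundaries. The key observation is that as the offset sweeps continuously across $[0, 1/2)$, the trimming assignment for a given window $s_i$ changes only when a period boundary crosses $t(s_i)$, i.e.\ when the offset passes through the normalized value $h_i = t(s_i) - a/2$. Between any two consecutive normalized starting times $h_{i-1}$ and $h_i$, \emph{no} window changes which period it is trimmed into, so all offsets in that open subinterval induce identical trimmings and are therefore equivalent. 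I would then argue that choosing one representative offset per such subinterval—precisely the midpoints $(h_{i-1}+h_i)/2$ that CANONICAL selects—yields a collection that covers every equivalence class, so any given set of periods is equivalent to one of these $m$ canonical sets.

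**Counting and timing.** There are $m$ window-starts, hence at most $m$ distinct normalized values $h_i$, which partition $[0,1/2)$ into at most $m$ subintervals (the procedure uses the first set starting at $0$ together with $m-1$ midpoints, giving at most $m$ sets total, matching the ``at most $m$'' bound). For the running time I would tally the work in CANONICAL: computing each $h_i$ is $O(1)$ per window for a total of $O(m)$; sorting the $h_i$ values dominates at $O(m\log m)$; and emitting the $m$ midpoints is a final $O(m)$ pass. The total is $O(m\log m)$, as claimed.

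**Main obstacle.** The routine part is the counting and timing; the genuinely delicate step is the equivalence argument, specifically verifying that the trimming induced by an offset depends \emph{only} on which subinterval of $[0,1/2)$ the offset lands in, and that this is preserved under the ``normalization'' that collapses each window-start modulo $1/2$. I would need to check carefully that the paper's stipulation that no window starts on a period or half-period boundary (enforced by the small shift mentioned earlier) guarantees the $h_i$ are well-defined and that the open subintervals are nonempty, so that a midpoint is always a legitimate, boundary-avoiding representative. The subtle point is confirming that two offsets differing by a full half-period (or landing in the same subinterval after normalization) really do trim every window identically, since the periods repeat with period $1/2$; once that periodicity is invoked, the reduction from the continuum of offsets to the finite canonical collection follows.
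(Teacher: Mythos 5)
Your proposal is correct and follows essentially the same route as the paper's proof: defining equivalence of period sets by the induced trimming assignment, observing that the assignment changes only when a period boundary sweeps past a normalized window start $h_i$, taking the midpoints between consecutive $h_i$ (plus $0$, via the wrap-around periodicity mod $1/2$) as representatives of the at most $m$ equivalence classes, and charging the $O(m\log m)$ cost to sorting. The one point the paper makes explicit that you leave implicit is that two offsets in the same class yield trimmed instances that are uniform time-translates of one another, which matter not because the problem is unrooted; otherwise the arguments coincide.
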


\begin{proof}
Without loss of generality, we assume that all time windows have unique starting times after normalization.  Label the normalized starting times of the windows $h_1$, $h_2$, $h_3$, $\ldots$, $h_m$ in ascending order as produced by the procedure CANONICAL.  The first point chosen by CANONICAL is 0.  The second point chosen is $(h_1 + h_2)/2$, the third is $(h_2 + h_3)/2$, and so on.

A starting point uniquely determines a set of periods, with each successive period starting at .5 time units after the start of the previous one.  Each set of periods induces a trimming of time windows according to the trimming procedure given before.  Define an equivalence relation on sets of periods:  Consider two trimmings induced by two different sets of periods.  If the difference in starting times from one set of periods to the next is the same for all time windows, then those two sets are equivalent.  Consider the time window starting at $h_i$.  Those sets of periods are equivalent that have starting times strictly between $h_i$ and $h_{i+1}$. Only by moving the starting time earlier than $h_i$ will the trimming of the earlier window (starting  at $h_i$) change, and only by moving the starting time later than $h_{i+1}$ will the trimming of the later window (starting at $h_{i+1}$) change.  Because our formulation of the traveling repairman problem does not specify a starting position or time, all equivalent sets of periods are functionally identical.  This equivalence relation partitions all possible sets of periods into equivalence classes.  By choosing a representative from each of these $m$ equivalence classes, we find a collection of $m$ sets of periods that covers all possible trimmings.

The running time for CANONICAL is $O(m \log m)$ because the $m$ starting times are sorted and all other work is linear.
\end{proof}

With CANONICAL defined, the description of SPEEDUP using speedup $s$ is straightforward.  We assume that no window starts on a period boundary as this situation can be easily remedied.  Let REPAIR be a basic repairman algorithm run after the trimming process has taken place, such as the algorithms in \cite{Frederickson3}.  As shown in \cite{Frederickson3}, REPAIR on a trimmed instance has an approximation ratio of at most $\gamma$ and a running time of $\Gamma(n)$.

\begin{table}[!hbt]
\begin{tabular}{l}
\smallskip \\
\toprule
\textbf{SPEEDUP} \\
\midrule
\tab If $s$ is rational and $s = q/r$ in reduced form, where $r < m$,\\
\tab \tab For $i$ from 1 to $r$,\\
\tab \tab \tab Let $h_i$ be $(i - 1)/r$.\\
\tab Otherwise,\\
\tab \tab Let $h_1$, $h_2$, $h_3$, \ldots $h_m$ be the period starting times found by CANONICAL.\\
\tab (Note that the total number of $h_i$ values is $\min\{r, m\}$.)\\
\tab For $i$ from 1 to $\min\{r, m\}$,\\
\tab \tab Start a set of periods at time $h_i$ and trim each window into the period it\\
\tab \tab completely fills.\\
\tab \tab Run REPAIR on the trimmed windows and retain the best result so far.\\
\bottomrule
\end{tabular}
\end{table}

Since algorithm SPEEDUP runs REPAIR a total of $\min\{r, m\}$ times, the running time for SPEEDUP is $O(\min\{r, m\}\Gamma(n))$.

\begin{observation}\label{observation:irrational}
Algorithm SPEEDUP works for any real number $1 \leq s \leq 4$; however, our analysis will assume that $s$ is a rational number such that $s = q/r$.  In the case that $s$ is irrational, our analysis still holds because the functions of $s$ with which we bound performance are piecewise smooth and continuous.  Thus, we can choose rational numbers arbitrarily close to any real number, and our analysis will be well-behaved in the limit.
\end{observation}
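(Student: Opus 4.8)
The statement has two parts, and I would handle them separately. The first part, that SPEEDUP is well-defined for irrational $s$, is immediate from the pseudocode: for irrational $s$ we take $r$ to be infinite, so the guard ``$r < m$'' fails, SPEEDUP falls through to the other branch, and it runs REPAIR on each of the $\min\{r,m\} = m$ period-sets produced by CANONICAL. Since CANONICAL depends only on the normalized window start times and not on $s$, the algorithm terminates, and by the running-time accounting already given it does so in $O(m\,\Gamma(n))$ time. The real content is therefore the second part: that the approximation guarantee proved for rational $s$ in Sects.~\ref{section:2<s<4} and \ref{section:1<s<2} carries over to irrational $s$.

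The difficulty is that the ensemble itself does \emph{not} vary continuously with $s$: the number of runs and their racing pattern are governed by the denominator $r$ of $s = q/r$, so the ensemble for a nearby rational need look nothing like the ensemble for $s$. The plan is thus to phrase the limiting argument not in terms of the ensemble but in terms of the quantity it lower-bounds. For a fixed trimming $\tau$, let $\mathrm{OPT}_\tau(s)$ denote the largest profit collectible by a run at speed $s$ that respects $\tau$. The key structural fact I would isolate is that $\mathrm{OPT}_\tau(s)$ is monotone non-decreasing in $s$, since the per-period distance budget $s/2$ only grows with $s$, so any run feasible at a slower speed stays feasible at a faster one. The ensemble analysis of the later sections establishes, for each rational $s$, a continuous (indeed piecewise-linear) function $\mu(s)$ with $\max_\tau \mathrm{OPT}_\tau(s) \ge \mu(s)\cdot\mathrm{profit}(R^*)$, the maximum being over all trimmings; by Lemma~\ref{lemma:m periods} equivalent trimmings give identical optima, so this maximum is already attained within the canonical collection that SPEEDUP searches.

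With these two ingredients in hand the extension is routine. I would fix an irrational $s_0$, choose rationals $s_k \uparrow s_0$ from below, and chain the inequalities. Because SPEEDUP runs REPAIR, whose output is within a factor $\gamma$ of $\max_\tau \mathrm{OPT}_\tau(s_0)$ over the canonical collection, the run $\hat{R}$ it produces at speed $s_0$ satisfies
\[
\mathrm{profit}(\hat{R}) \;\ge\; \frac{1}{\gamma}\max_\tau \mathrm{OPT}_\tau(s_0) \;\ge\; \frac{1}{\gamma}\max_\tau \mathrm{OPT}_\tau(s_k) \;\ge\; \frac{\mu(s_k)}{\gamma}\,\mathrm{profit}(R^*),
\]
where the middle inequality is monotonicity (using $s_k < s_0$) and the last is the rational guarantee applied to $s_k$. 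Letting $k \to \infty$ and using continuity of $\mu$ gives $\mu(s_k)\to\mu(s_0)$, so $\mathrm{profit}(\hat{R}) \ge \mu(s_0)\,\mathrm{profit}(R^*)/\gamma$, exactly the bound claimed for rational speedups.

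The step I expect to be the crux is precisely the one that lets the argument avoid re-deriving the ensemble for irrational $s$: pushing the limit through $\mathrm{OPT}_\tau$ rather than through REPAIR's output, which, being the result of an approximation algorithm, need not be monotone or continuous in $s$, and approaching $s_0$ \emph{from below} so that monotonicity supplies the inequality in the correct direction. Everything else is bookkeeping, including the continuity of the bounding functions, which is confirmed once the explicit forms $\mu(s)=(s+1)/6$ on $[1,2]$ and $\mu(s)=s/4$ on $[2,4]$ are available from the later sections and seen to agree at $s=2$.
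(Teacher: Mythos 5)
Your proposal is correct, and it is substantially more rigorous than what the paper offers: the paper gives no separate proof of this observation at all, instead letting the one-sentence justification embedded in the statement (``the functions of $s$ with which we bound performance are piecewise smooth and continuous... our analysis will be well-behaved in the limit'') stand on its own. You correctly identify why that sentence, taken literally, is not yet an argument --- the ensemble, the number of runs, and the collection of period sets all depend on the denominator $r$ and so do not vary continuously with $s$, so one cannot simply ``take the limit of the analysis.'' Your fix, routing the limit through the quantity $\mathrm{OPT}_\tau(s)$ rather than through the ensemble or through REPAIR's output, and supplying the monotonicity of $\mathrm{OPT}_\tau$ in $s$ (a faster repairman can replicate any slower run by waiting) so that the rational guarantee at $s_k < s_0$ transfers upward to $s_0$, is exactly the missing ingredient that makes the paper's intended limiting argument go through; the appeal to Lemma~\ref{lemma:m periods} to see that SPEEDUP's canonical collection attains $\max_\tau \mathrm{OPT}_\tau(s_0)$ up to the factor $\gamma$ is also the right way to connect the bound back to the algorithm's actual output. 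What the paper's terse version buys is brevity and a direct appeal to the explicit forms $(s+1)/6$ and $s/4$; what yours buys is an actual proof, including the observation that one must approach $s_0$ from below so that monotonicity points in the useful direction. The only quibble is cosmetic: continuity of $\mu$ at the single point $s=2$ where the two formulas meet is not needed for your argument (each irrational $s_0$ lies in the interior of one of the two ranges), though noting the agreement there does no harm.
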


\section{Analysis of Speedup in the Range $2 \leq s \leq 4$}
\label{section:2<s<4}

In this section, we describe an ensemble of runs and the coverage given by such a set of runs when speedups $s$ is in the range $2 \leq s \leq 4$.  Based on the average coverage of the ensemble, the algorithm SPEEDUP produces a $4\gamma/s$-approximation in this speedup range.

We assume that no time window starts at time $i/(2r)$ for any integer $i$.  If any do, we can perturb all times slightly so that this is not the case.  Let $[w, w+1)$ be any time window.  Let $\omega$ be the smallest integer multiple of $1/(2r)$ that is greater than $w$.  We designate subintervals $[w, \omega)$, $[\omega, \omega + 1/(2r))$, $[\omega + 1/(2r), \omega + 2/(2r))$, $\ldots$, $[\omega + (2r - 1)/(2r), w + 1)$ by $w_0$, $w_1$, $w_2$, $\ldots$, $w_{2r}$.  When a given time window is trimmed, some portion of the original window will lie in an earlier period, half of it in the period it was trimmed into, and the remaining portion in the following period.  As before, the goal is to use an ensemble of runs to cover every part of each time window so that we can show a lower bound on profit with an averaging argument.  However, using sets $L$, $T$, and $E$ alone is too coarse-grained for general $s$.  Instead, we record the coverage of each subinterval $w_i$.  For accounting purposes, we assign a 1 for any subinterval which is covered every period, a $1/2$ for any subinterval covered every other period, and a 0 otherwise.  This accounting scheme implicitly averages runs $A$ and $A^R$ with their respective shifted versions $\vec{A}$ and $\vec{A}^R$.  Because the union of requests serviced during these subintervals consists of all the requests serviced by an optimal path, the Average Coverage Proposition still applies.

We partition sets $L$, $T$, and $E$ into subsets: $L_1$ through $L_{r}$, $T_1$ through $T_{r}$, and $E_1$ through $E_{r}$ respectively.  If we partition each period into $r$ equal-length divisions, then subset $L_j$, $T_j$, or $E_j$, for any given $j$, will consist of those service requests from set $L$, $T$, or $E$, respectively, which were serviced by the optimal run in the $j^{th}$ division of a period.  Note that, for a given period starting time, the window for a service request can only overlap with $2r + 1$ of these divisions.  Thus, we can make a one-to-one mapping of the $2r + 1$ subintervals of a window to $2r + 1$ of the $3r$ subsets of $L$, $T$, or $E$.  For a given window, some set of periods maps all subintervals of the window to $L_{1}$ through $L_{r}$, $T_1$ through $T_{r}$, and $E_1$.  Another set of periods maps all subintervals of the window to $L_{2}$ through $L_{r}$, $T_1$ through $T_{r}$, and $E_1$ through $E_2$.  Each remaining set of periods maps one fewer $L_i$ subset and one additional $E_j$ subset, where $j = i + 1$ for set $i$.  Table \ref{table:2<s<4} shows which subsets are mapped to subintervals of some time window depending on the set of periods used.

\begin{table}[!hbt]
\begin{center}
\begin{tabular}{ c c c c c c c c c c c c c c }
\toprule
Set of & & & \multicolumn{11}{c}{Subintervals} 
\\
  Periods & & & $w_0$     & $w_1$ & $w_2$ & $\ldots$ & $w_{r}$ & $w_{r + 1}$ & $w_{r + 2}$ & $\ldots$ & $w_{2r - 2}$ &  $w_{2r - 1}$ & $w_{2r}$ \\
\toprule
	    $0$ & & & $L_1$ & $L_2$ & $L_3$ & $\ldots$ & $T_1$ & $T_2$ & $T_3$                      & $\ldots$ & $T_{r - 1}$     &  $T_{r}$     & $E_1$ \\
\midrule
	    $1$ & & & $L_2$ & $L_3$ & $L_4$ & $\ldots$ & $T_2$ & $T_3$ & $T_4$                      & $\ldots$ & $T_{r}$     &  $E_1$     & $E_2$ \\
\midrule
	    $\vdots$ & & & $\vdots$ & $\vdots$ & $\vdots$ &  & $\vdots$ & $\vdots$ & $\vdots$         &  & $\vdots$     &  $\vdots$     & $\vdots$ \\
\midrule
	    $r - 2$ & & & $L_{r - 1}$ & $L_{r}$ & $T_1$ & $\ldots$ & $T_{r - 1}$ & $T_{r}$ & $E_1$         & $\ldots$ & $E_{r - 3}$     &  $E_{r - 2}$     & $E_{r - 1}$ \\
\midrule
	    $r - 1$ & & & $L_{r}$ & $T_1$ & $T_2$ & $\ldots$ & $T_{r}$ & $E_1$         & $E_2$         & $\ldots$ & $E_{r - 2}$     &  $E_{r - 1}$     & $E_{r}$ \\
\bottomrule
\end{tabular}\\
\end{center}
\caption{Possible ways to map the subintervals of a time window to subsets.}
\label{table:2<s<4}
\end{table}

Figure~\ref{figure:speeds} gives three examples of type $A$ runs
for various speedups in the range $2<s<4$.  Note that all of the runs in this figure arrive at the same position at the beginning of every second period, namely at times $0, 1, 2,$ and so on.  Portions of runs servicing requests in $L$, $T$, and $E$ or various subsets are identified: subset $E_1$ being those service requests of $E$ serviced by $R^*$ in the first half of a period for $s = 7/2$, subsets $T_1$, $T_2$, $T_3$, $E_1$, $E_2$, and $E_3$ being those service requests of $T$ or $E$ serviced by $R^*$ in various quarter periods for $s = 11/4$, and subsets $T_1$, $T_2$, $E_1$, and $E_2$ being those service requests of $T$ or $E$ serviced by $R^*$ in various thirds of periods for $s = 7/3$.

\begin{figure}[!hbt]
\centering
\begin{xy}
\newxycolor{white}{1 gray}
\xyimport(397, 227){\includegraphics[width=.75\textwidth]{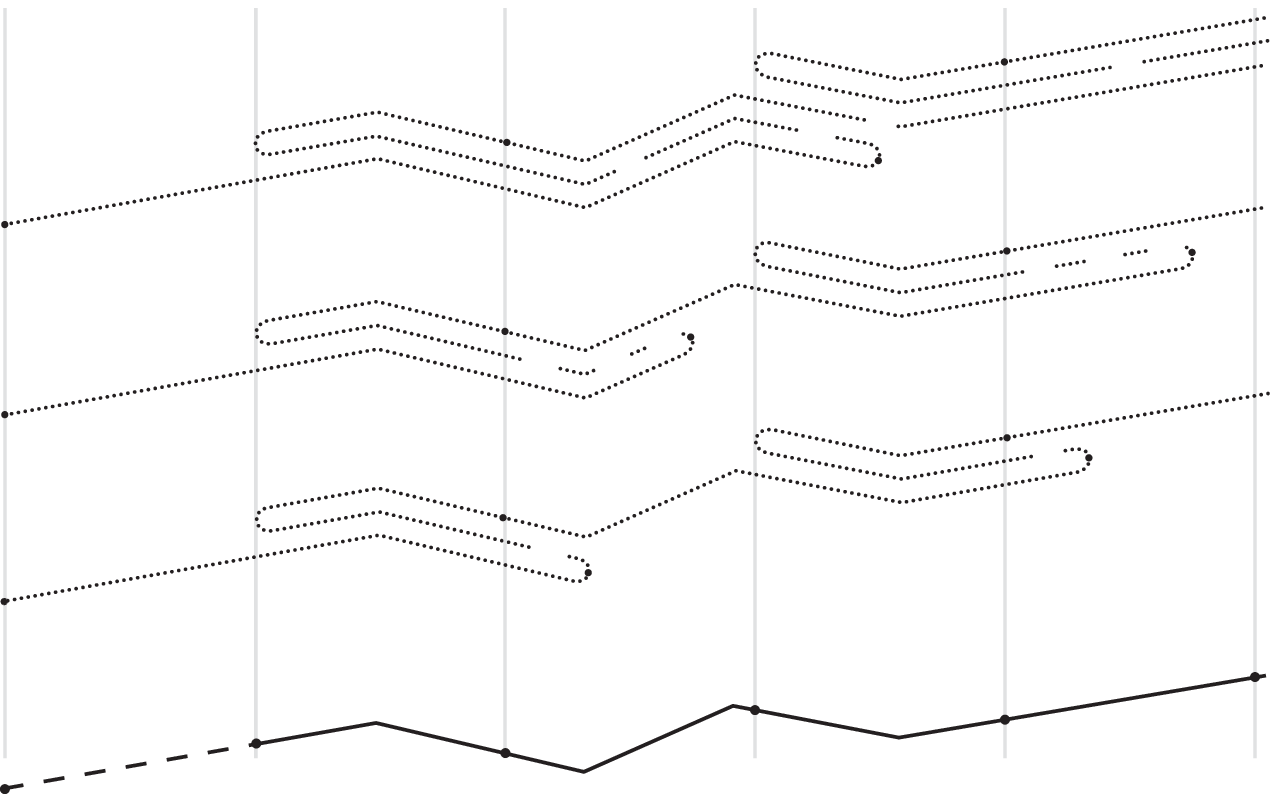}}
(-24,185)*!R\txt\footnotesize{Run $A$ for};
(-24,170)*!R{\large s = {7 \over 2}};
(-24,128)*!R\txt\footnotesize{Run $A$ for};
(-24,113)*!R{\large s = {11 \over 4}};
(-24,75)*!R\txt\footnotesize{Run $A$ for};
(-24,60)*!R{\large s = {7 \over 3}};
(-24,1)*!R\txt\footnotesize{Optimal Run $R^*$};
(80,6)*\txt\footnotesize{$0$};
(157,3)*\txt\footnotesize{$.5$};
(236,15)*\txt\footnotesize{$1$};
(314,12)*\txt\footnotesize{$1.5$};
(-4,-8)*\txt\footnotesize{$-.5$};
(392,25)*\txt\footnotesize{$2$};
(1,47)*\txt\scriptsize{$0$};
(190,61)*\txt\scriptsize{$.5$};
(157,87)*\txt\scriptsize{$1$};
(350,98)*\txt\scriptsize{$1.5$};
(314,110)*\txt\scriptsize{$2$};
(1,100)*\txt\scriptsize{$0$};
(222,134)*\txt\scriptsize{$.5$};
(157,141)*\txt\scriptsize{$1$};
(382,157)*\txt\scriptsize{$1.5$};
(314,164)*\txt\scriptsize{$2$};
(1,155)*\txt\scriptsize{$0$};
(281,182)*\txt\scriptsize{$.5$};
(157,195)*\txt\scriptsize{$1$};
(314,218)*\txt\scriptsize{$2$};
(41,69)*\txt\scriptsize{$L$};
(119,67)*\txt\scriptsize{$T$};
(172,55)*\txt\tiny{$E_1$};
(172,70)*\txt\tiny{$T_1$};
(119,93)*\txt\scriptsize{$L$};
(196,86)*\txt\scriptsize{$L$};
(276,77)*\txt\scriptsize{$T$};
(328,84)*\txt\tiny{$E_1$};
(328,98)*\txt\tiny{$T_1$};
(276,105)*\txt\scriptsize{$L$};
(352,114)*\txt\scriptsize{$L$};
(41,122)*\txt\scriptsize{$L$};
(119,120)*\txt\scriptsize{$T$};
(170,110)*\txt\tiny{$E_1$};
(192,110)*\txt\tiny{$E_2$};
(208,116)*\txt\tiny{$E_3$};
(170,124)*\txt\tiny{$T_1$};
(192,124)*\txt\tiny{$T_2$};
(208,130)*\txt\tiny{$T_3$};
(119,147)*\txt\scriptsize{$L$};
(196,139)*\txt\scriptsize{$L$};
(276,131)*\txt\scriptsize{$T$};
(325,138)*\txt\tiny{$E_1$};
(346,142)*\txt\tiny{$E_2$};
(365,145)*\txt\tiny{$E_3$};
(325,151)*\txt\tiny{$T_1$};
(346,154)*\txt\tiny{$T_2$};
(365,157)*\txt\tiny{$T_3$};
(276,158)*\txt\scriptsize{$L$};
(352,167)*\txt\scriptsize{$L$};
(41,177)*\txt\scriptsize{$L$};
(119,175)*\txt\scriptsize{$T$};
(196,167)*\txt\scriptsize{$E$};
(196,181)*\txt\scriptsize{$T$};
(119,202)*\txt\scriptsize{$L$};
(196,194)*\txt\scriptsize{$L$};
(256,190)*\txt\tiny{$E_1$};
(352,197)*\txt\scriptsize{$E$};
(276,194)*\txt\scriptsize{$T$};
(276,212)*\txt\scriptsize{$L$};
(352,209)*\txt\scriptsize{$T$};
(352,222)*\txt\scriptsize{$L$};
\end{xy}
\caption{Examples of type $A$ runs for three different speedups in the range $2 < s < 4$, namely at $7/3$, $11/4$, and $7/2$.}
\label{figure:speeds}
\end{figure}

Algorithm SPEEDUP runs a repairman algorithm on $\min \{r, m\}$ different sets of periods and takes the best run from those choices.

\begin{theorem}
For any $s$ in the range $2 \leq s \leq 4$, SPEEDUP finds a $4\gamma/s$-approximation to repairman with unit-time windows in $O(\min\{r, m\}\Gamma(n))$ time.
\label{thm:W3for2<s<4}
\end{theorem}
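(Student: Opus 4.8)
The plan is to apply the Average Coverage Proposition to an ensemble $U$ of analysis runs built from $R^*$, thereby reducing the theorem to a single coverage inequality. For $s = q/r$ I would let $U$ consist of, for each of the $\min\{r,m\}$ sets of periods produced by SPEEDUP, a forward type-$A$ run and its reverse $A^R$ together with their half-period shifts $\vec A$ and $\vec A^R$; the accounting rule that scores a subinterval $1$, $1/2$, or $0$ according to whether it is covered every period, every other period, or never is exactly the bookkeeping that averages each run with its shift, as in Tables~\ref{table:s=2} and~\ref{table:s=3}. I would then partition the requests serviced by $R^*$ according to which length-$1/(2r)$ division of $R^*$'s timeline contains their service time, so that the union of these subsets is all of $R^*$'s requests and the Proposition applies. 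The whole theorem follows once $\min_i cover_U(S_i) \ge s/4$ is established: the Proposition yields an ensemble run, a type-$A$ run defined on one particular trimming, whose profit is at least $(s/4)\,$profit$(R^*)$; running REPAIR on that trimming produces a run within a factor $\gamma$ of the optimum there, hence of profit at least $(s/(4\gamma))\,$profit$(R^*)$; and since SPEEDUP keeps the best result over all $\min\{r,m\}$ trimmings, its output $\hat R$ satisfies profit$(\hat R) \ge (s/(4\gamma))\,$profit$(R^*)$, a $4\gamma/s$-approximation.

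The core of the argument is the coverage computation, for which I would build the general analogue of Tables~\ref{table:s=2} and~\ref{table:s=3}. Using the mapping in Table~\ref{table:2<s<4}, a given window contributes its $2r+1$ subintervals $w_0,\dots,w_{2r}$ to the subsets $L_1,\dots,L_r,T_1,\dots,T_r,E_1,\dots,E_r$, and as the period start ranges over the $r$ canonical choices the same subinterval slides through successive subsets (row $i$ sends $w_k$ to $L_{i+1+k}$, and then into the $T$ and $E$ blocks). I would fill in the $1/2/0$ score of each subinterval for the forward runs and, by the mirror-image pattern, for the reverse runs, then sum each column into a yield and divide by the number of runs. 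The forward runs score the $L$ and middle subintervals heavily while the reverse runs score the $E$ and middle subintervals heavily, and the two patterns are reflections of one another; the claim to verify is that after averaging every column total is at least $s/4 = q/(4r)$, with the extreme $L$ and $E$ subsets meeting it with equality. Checking the endpoints against the known cases $s=2$ (all columns $1/2$) and $s=3$ (the $L,E$ columns $3/4$, the $T$ columns $1$) confirms the bookkeeping, and the endpoint anomalies in periods $-1$ and $0$ are handled exactly as in the $s=2$ discussion, contributing nothing.

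The main obstacle is precisely this count for general fractional $s$: I must pin down, as a function of the racing pattern (forward $1$ period, backward $1-2\delta$, forward $2\delta$, with $\delta = 1/(2s)$) together with the half-period shift, which of the $2r+1$ subintervals a type-$A$ run and its shift actually cover during the period a window is trimmed into, and show the forward, reverse, and shifted scores combine to exactly $q/(4r)$ in the worst column. The delicate points are that the net advance of a type-$A$ run matches $R^*$ only over a two-period cycle, so parity matters and the shift is what repairs the resulting every-other-period gaps; that the backward-then-forward portion double-covers a band of $R^*$ whose width is governed by $\delta$, and hence by $s$, which is what drives coverage up from $1/2$ toward $1$ as $s$ grows; and that these bands must align with the $1/(2r)$ grid so the scores are genuinely $0$, $1/2$, or $1$ on each subinterval. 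Once the worst-column total is shown to equal $s/4$, the running-time claim is immediate, since SPEEDUP invokes REPAIR on $\min\{r,m\}$ trimmings, each in $\Gamma(n)$ time.
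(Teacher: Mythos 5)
Your setup is the paper's: the ensemble $\{A,\vec A,A^R,\vec A^R\}$ evaluated against the $r$ period choices, the partition of $R^*$'s requests into length-$1/(2r)$ divisions, the $1$ / $1/2$ / $0$ accounting, and the reduction via the Average Coverage Proposition to the single inequality $\min_i cover_U(S_i) \geq s/4$, followed by the factor-$\gamma$ loss from REPAIR and the max over trimmings. That reduction is correct, and your handling of how the Proposition's distinguished run selects a particular trimming for REPAIR is sound.

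The problem is that the inequality itself -- the entire quantitative content of the theorem -- is not established. You correctly identify it as ``the main obstacle'' and describe what would have to be pinned down, but you never carry out the count. The paper does: it shows that if a trimmed window starts at subinterval $w_i$ ($1 \leq i \leq r$), run $A$ covers the first $q-2r+i$ subintervals every period and the next $r$ subintervals every other period; summing over the $r$ period choices, the pair $A,\vec A$ yields $r$ for each $w_i$ with $0 \leq i \leq q-2r$ and a total decreasing by $1/2$ per step thereafter, with $A^R,\vec A^R$ giving the mirror image; the endpoint columns $w_0$ and $w_{2r}$ then sum to exactly $r + (r - \tfrac{1}{2}(4r-q)) = q/2$, and a monotonicity argument (which splits into the cases $s<3$ and $s\geq 3$, because for $s\geq 3$ the forward contribution is constant on $0 < i \leq r$ rather than decreasing) shows every other column is at least $q/2$; dividing by the $2r$ averaging units gives $q/(4r)=s/4$. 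Your sanity checks at $s=2$ and $s=3$ and your qualitative remarks about $\delta$, parity, and the double-covered band are all pointing in the right direction, but they do not substitute for this computation, and the case split at $s=3$ is a detail you would likely miss without doing it explicitly. As written, the proposal asserts the worst-column bound rather than proving it.
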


\begin{proof}
For any window $[w, w + 1)$, on service run $A$, if a trimmed interval starts with subinterval $w_i$, where $1 \leq i \leq r$, then the window's first $q - 2r + i$ subintervals will be covered each period, and the next $r$ subintervals after that will be covered every other period (but no further than the last subinterval).  When $s < 3$, run $A$ completely covers subsets $L_1$ through $L_{r}$, covers those requests in subsets $T_1$ through $T_{r}$ serviced by $R^*$ in half of the periods, and covers those requests in subsets $T_1$ through $T_{q - 2r}$ and subsets $E_1$ through $E_{q - 2r}$ serviced by $R^*$ in the other half of the periods.  When $s > 3$, run $A$ completely covers subsets $L_1$ through $L_{r}$ and subsets $T_1$ through $T_{r}$, covers those requests in subsets $E_1$ through $E_{q - 3r}$ serviced by $R^*$ in half of the periods, and covers those requests in subsets $E_1$ through $E_{r}$ serviced by $R^*$ in the other half of the periods.  
A similar analysis applies to $\vec{A}$, $A^R$, and $\vec{A}^R$.

We assign a 1 for any subinterval which is covered every period and a $1/2$ for any subinterval covered every other period.  For runs $A$ and $\vec{A}$ together, $w_i$ earns a 1 for all $r$ sets of periods where $0 \leq i \leq q - 2r$, giving a total of $r$ for each such $i$.  For each $i > q - 2r$, the total decreases by $1/2$ from the total for $i - 1$.   For runs $A^R$ and $\vec{A}^R$ together, $w_i$ gets 1 for all $r$ sets of periods where $2r - (q - 2r) \leq i \leq 2r$, giving a total of $r$ for each such $i$.  For each $i < 2r - (q - 2r)$, the total decreases by $1/2$ from the total for $i + 1$.

Now, we take the total over all sets of periods for both runs.  For runs $A$ and $\vec{A}$ together we get a total of $r$ for $w_0$.  For runs $A^R$ and $\vec{A}^R$ together we get a total of $r - (1/2)(2r - (q - 2r)) = q/2 - r$.  Summing these together, we get a yield of $q/2$ for $w_0$.  By symmetry, the total for $w_{2r}$ is also $q/2$.  In the case that $s < 3$, the contributions for runs $A$ and $\vec{A}$ together decrease by no more than $1/2$ as the contributions for runs $A^R$ and $\vec{A}^R$ together increase by $1/2$ for each increment of $i$ in the range $0 < i \leq r$.  In the case that $s \geq 3$,  contributions from $A$ and $\vec{A}$ together are constant and contributions from $A^R$ and $\vec{A}^R$ together only increase or stay constant for $0 < i \leq r$.  Thus, the yield for all other $w_i$ in all cases is at least $q/2$.  Since $r$ sets of periods for the two pairs of runs cost a total of $2r$ sets of periods to average over, the yield for the algorithm is at least $q/2 \cdot 1/(2r) = q/(4r) = s/4$.  Multiplying the reciprocal by $\gamma$ gives a $4\gamma/s$-approximation.  By Observation \ref{observation:irrational}, this result holds for all real values of $s$.  Note that the details of this proof can also be generated using the CREATE-TABLE procedure given in Sect.~\ref{section:1<s<2}.\end{proof}

\section{Analysis of Speedup in the Range $1 \leq s \leq 2$}
\label{section:1<s<2}

A different analysis gives us better performance as a function of $s$ in the range $1 \leq s \leq 2$.  Based on the average coverage of the ensemble, the algorithm SPEEDUP produces a $6\gamma/(s + 1)$-approximation in this speedup range.  We consider four runs (and their shifts) and balance their relative weightings.  For the range $1 \leq s \leq 2$, we represent speedup $s$ as $s = (r + k)/r$ with integers $r \geq 1$ and $0 \leq k \leq r$.

Our analysis will require several versions of $A$ that have different starting points.  To simplify notation, let a \emph{hop} be the distance traveled in time $1/(2r)$ at unit speed.  Let $A_\Delta$ be run $A$ starting $\Delta$ hops further along and run $A_\Delta^R$ be run $A^R$ starting $\Delta$ hops further back. Run $A_\Delta$ follows the same pattern of movement as run $A$ but starts at $t = 0$ at the location that $R^*$ has at  $t = -1/2 + \Delta/(2r)$.  Its reverse $A_\Delta^R$ follows the same pattern of movement as $A^R$ but starts at $t = 0$ at the location that $R^*$ has at $t = 1/2 - \Delta/(2r)$.

Examples of runs $A$, $A^R$, $A_{r - k}$, and $A_{r - k}^R$ are shown in Figure~\ref{figure:s=5/4} for $s = 5/4$ where $r = 4$ and $k = 1$.  Portions of runs servicing requests in $L$, $T$, and $E$ or various subsets are identified: subsets $L_4$, $T_1$, $T_2$, $T_3$, $T_4$, and $E_1$ being those service requests of $L$, $T$, and $E$ serviced by $R^*$ in various quarter periods.  We also define $\vec{A}$, $\vec{A}^R$, $\vec{A}_{r - k}$, and $\vec{A}_{r - k}^R$ as indicated earlier and implicitly average the contributions of these shifted runs with the contributions of their unshifted versions in Tables \ref{table:lowcontributions} and \ref{table:highcontributions}.

\begin{figure}[!hbt]
\centering
\begin{xy}
\xyimport(366, 236){\includegraphics[width=.75\textwidth]{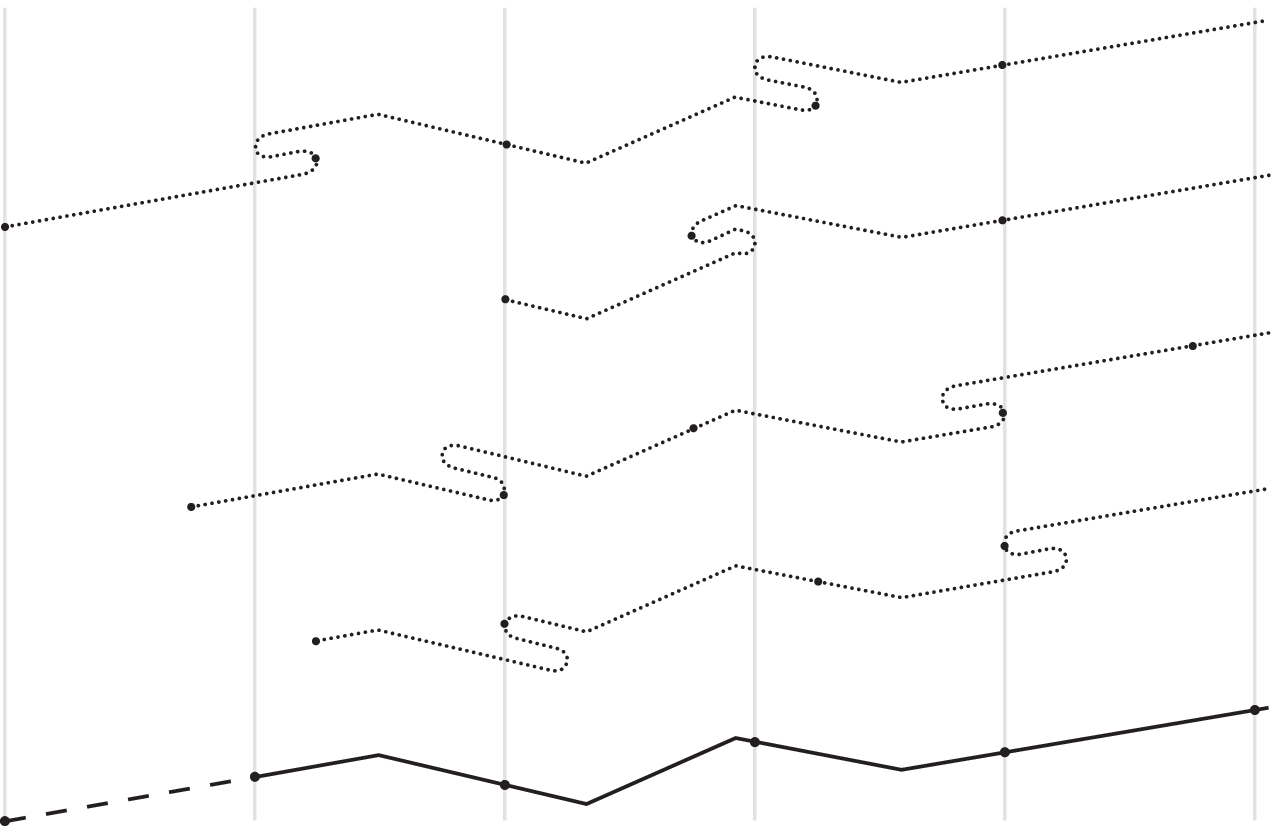}}
(-8,180)*!R\txt\footnotesize{Run $A$};
(-2,140)*!R\txt\footnotesize{Run $A^R$};
(-8,100)*!R\txt\footnotesize{Run $A_{r - k}$};
(-2,60)*!R\txt\footnotesize{Run $A_{r - k}^R$};
(-6,2)*!R\txt\footnotesize{Optimal Run $R^*$};
(-2,-8)*\txt\footnotesize{$-.5$};
(73,6)*\txt\footnotesize{$0$};
(144,3)*\txt\footnotesize{$.5$};
(218,16)*\txt\footnotesize{$1$};
(290,13)*\txt\footnotesize{$1.5$};
(362,25)*\txt\footnotesize{$2$};
(84,52)*\txt\scriptsize{$0$};
(137,62)*\txt\scriptsize{$.5$};
(236,79)*\txt\scriptsize{$1$};
(278,79)*\txt\scriptsize{$1.5$};
(48,91)*\txt\scriptsize{$0$};
(152,93)*\txt\scriptsize{$.5$};
(200,107)*\txt\scriptsize{$1$};
(300,121)*\txt\scriptsize{$1.5$};
(344,131)*\txt\scriptsize{$2$};
(139,153)*\txt\scriptsize{$0$};
(191,168)*\txt\scriptsize{$.5$};
(289,167)*\txt\scriptsize{$1$};
(1,165)*\txt\scriptsize{$0$};
(98,194)*\txt\scriptsize{$.5$};
(146,189)*\txt\scriptsize{$1$};
(245,206)*\txt\scriptsize{$1.5$};
(289,212)*\txt\scriptsize{$2$};
(100,48)*\txt\tiny{$T_2$};
(118,48)*\txt\tiny{$T_3$};
(136,44)*\txt\tiny{$T_4$};
(154,40)*\txt\tiny{$E_1$};
(181,68)*\txt\scriptsize{$T$};
(227,66)*\txt\tiny{$E_1$};
(245,63)*\txt\tiny{$T_2$};
(263,60)*\txt\tiny{$T_3$};
(281,63)*\txt\tiny{$T_4$};
(299,66)*\txt\tiny{$E_1$};
(326,97)*\txt\scriptsize{$T$};
(64,100)*\txt\tiny{$L_4$};
(109,95)*\txt\scriptsize{$T$};
(137,115)*\txt\tiny{$L_4$};
(154,110)*\txt\tiny{$T_1$};
(172,110)*\txt\tiny{$T_2$};
(190,117)*\txt\tiny{$T_3$};
(208,125)*\txt\tiny{$L_4$};
(254,119)*\txt\scriptsize{$T$};
(281,134)*\txt\tiny{$L_4$};
(299,137)*\txt\tiny{$T_1$};
(317,140)*\txt\tiny{$T_2$};
(335,143)*\txt\tiny{$T_3$};
(353,145)*\txt\tiny{$L_4$};
(181,144)*\txt\scriptsize{$E$};
(208,184)*\txt\tiny{$T_4$};
(254,178)*\txt\scriptsize{$E$};
(326,187)*\txt\scriptsize{$E$};
(36,185)*\txt\scriptsize{$L$};
(82,180)*\txt\tiny{$T_1$};
(109,211)*\txt\scriptsize{$L$};
(181,203)*\txt\scriptsize{$L$};
(227,200)*\txt\tiny{$T_1$};
(254,221)*\txt\scriptsize{$L$};
(326,232)*\txt\scriptsize{$L$};
\end{xy}
\caption{Examples of runs $A$, $A^R$, $A_{r - k}$, and $A_{r - k}^R$ with a speedup of $s = (r + k)/r = 5/4$.}
\label{figure:s=5/4}
\end{figure}

In order to create these tables, we define a procedure called CREATE-TABLE that describes the process of determining coverage for a particular speedup $s$ for a particular run moved $\Delta$ hops.  Note that CREATE-TABLE is not an algorithm that is run in the process of finding an approximation to a repairman problem with speedup.  Rather, it provides a template that can be used to produce the tables used in analyzing the performance of such approximations.

Before CREATE-TABLE can be completely defined, it is necessary to explain the pattern of coverage generated by a run.  For all type $A$ runs, this pattern takes one of two forms.  Let $s$ be a rational number such that $s = q/r$.  Type $A$ runs repeat every two periods and thus can be represented with a pattern of coverage that uses a 1 to signify a subinterval covered every period and a 1/2 to signify a subinterval covered every other period.

Observe the movement of type $A$ runs, noting that such a run moves forward in one period the same distance that run $R^*$ moves in $q$ subintervals.  During the next period of time, a type $A$ run moves backward the same distance that run $R^*$ moves during $q - r$ subintervals and then forward the same distance that run $R^*$ moves during $r$ subintervals. Then, the pattern repeats.  When $s < 2$, run $A$ during its first period in the pattern covers $q$ successive subintervals as it moves forward, while in its second period covers $r$ subintervals as it moves forward.  Note that those subintervals covered as $A$ moves backward add nothing additional to the coverage.  Thus, this pattern of coverage is represented as $r$ repetitions of 1 and $q - r$ repetitions of 1/2.  

\begin{table}[!hbt]
\begin{tabular}{l}
\smallskip \\
\toprule
\textbf{CREATE-TABLE(hops $\Delta$)} \\
\midrule
\tab Let the first element of the coverage pattern be indexed at 0.\\
\tab Let the subintervals of interest be numbered 0 through $2r$.\\
\tab Define function $\mathcal{C}$ based on the coverage pattern, such that:\\
\tab \tab $\mathcal{C}(i) = \left\{ \begin{array}{ll}
1&\mbox{if term }i\mbox{ of the pattern is 1}\\ 
1/2&\mbox{if term }i\mbox{ of the pattern is 1/2}\\ 
0&\mbox{otherwise}
\end{array} \right.$ \\
\tab Define function ${\mathcal F}$ on integers $i$, where $0 \leq i \leq 2r$:\\
\tab \tab ${\mathcal F}(i) = \sum_{j = 0}^{r - 1} \mathcal{C}(i + j - \Delta)$.\\
\tab Define function ${\mathcal F}^R$ on the same domain:\\
\tab \tab ${\mathcal F}^R(i) = {\mathcal F}(2r - i)$.\\
\tab The final coverage function defined by the table is given by ${\mathcal F}(i) + {\mathcal F}^R(i)$.\\
\bottomrule
\end{tabular}
\end{table}

CREATE-TABLE defines functions ${\mathcal F}(i)$ and ${\mathcal F}(i)^R$ such that the final coverage that we define is ${\mathcal F}(i) + {\mathcal F}(i)^R$ for $i = 0, 1, \ldots, 2r$.  Note that the functions $\mathcal{F}$ and $\mathcal{F}^R$ are piecewise linear functions with ranges dependent on the fundamental pattern of coverage.  Due to its construction, the combination ${\mathcal F}(i) + {\mathcal F}^R(i)$ is also a piecewise linear function and symmetrical.  Thus, only the range $0 \leq i \leq r$ need be listed in tables.\\

Although CREATE-TABLE gives a procedure for creating a table for a given speedup, we need tables expressed symbolically to prove coverage for a range of speedups.  Instead of using specific numbers, we can leave the basic patterns of subinterval coverage for a given style of run (such as type $A$ runs) with its shifted version in terms of $q$ and $r$.  By shifting this pattern $r$ times and summing the results together, we account for the different alignments a time window might have with respect to the various subintervals.  This sum is the function $\mathcal{F}$, which we can now express as a piecewise linear function parameterized by $q$ and $r$.  Function $\mathcal{F}^R$, which describes type $A^R$ runs, can be similarly described.  To combine the two functions symbolically, we sort the end points of the subinterval ranges from both descriptions together.  If, for the given range of speedups being considered, there are two end points which cannot be ordered, we subdivide the range of speeds so that, in each new speed range, the two end points in question can be ordered.  Once the end points of each subinterval range have been sorted, combining the descriptions from the normal and reversed functions of the runs is achieved by simply summing each range.

\begin{table}[!hbt]
\begin{center}
\begin{tabular}{rlr}
$\mathcal{F}(i) = $ & $\left\{ \begin{array}{l}
r -  {1 \over 2}i \smallskip \\ 
r + {1 \over 2}k - i \smallskip \\ 
{1 \over 2}r + {1 \over 2}k - {1 \over 2}i
\end{array} \right.$ &
$\begin{array}{rcl}
0 \leq &i& \leq k \smallskip \\ 
k \leq &i& \leq r \smallskip \\
r \leq &i& \leq r + k
\end{array}$
\\ \\
$\mathcal{F}^R(i) = $ & $\left\{ \begin{array}{l}
{1 \over 2}i - {1 \over 2}r + {1 \over 2}k \smallskip \\ 
{1 \over 2}k - r + i \smallskip \\ 
{1 \over 2}i
\end{array} \right.$ &
$\begin{array}{rcl}
r - k \leq &i& \leq r \smallskip \\ 
r \leq &i& \leq 2r - k \smallskip \\
2r - k \leq &i& \leq 2r
\end{array}$
\end{tabular}
\end{center}
\caption{Separate coverage functions for runs $A$ and $A^R$ when $1 \leq s \leq 2$.}
\label{table:F functions for A runs}
\end{table}

In Table \ref{table:F functions for A runs} we use $A$ runs to give an example of how symbolic coverage works.
We are only interested in the range $0 \leq i \leq r$.  In this range, the sub-ranges $0 \leq i \leq k$ and $k \leq i \leq r$ for $\mathcal{F}$ overlap with the sub-ranges $0 \leq i \leq r - k$ and $r - k \leq i \leq r$ for $\mathcal{F}^R$.  When $k \leq r - k$, for $k \leq i \leq r - k$, $\mathcal{F}(i) + \mathcal{F}^R(i) = (r + k/2 - i) + 0 = r + k/2 - i$, as in Table \ref{table:lowcontributions}. When $k \geq r - k$, for $r - k \leq i \leq k$, $\mathcal{F}(i) + \mathcal{F}^R(i) = (r - i/2) + (i/2 - r/2 + k/2) = r/2 + k/2$, as in Table  \ref{table:highcontributions}.  The combined coverages of runs $A$, $A^R$, $A_{r-k}$, $A_{r-k}^R$, and all of their respective shifted versions are all given in Table \ref{table:lowcontributions} when $k \leq r - k$ and in Table \ref{table:highcontributions} when $k \geq r - k$.

\begin{table}[!hbt]
\begin{center}
\begin{tabular}{rcll}
\begin{tabular}{r}
Combined contributions\\
for $A$ and $A^R$
\end{tabular} & = &
$\left\{ \begin{array}{l}
r - {1 \over 2}i \smallskip \\ 
r + {1 \over 2}k - i \smallskip \\ 
{1 \over 2}r + k  - {1 \over 2}i
\end{array} \right.$ &
$\begin{array}{rcl}
0 \leq &i& \leq k \smallskip \\ 
k \leq &i& \leq r - k \smallskip \\
r - k \leq &i& \leq r
\end{array}$
\\ \\
\begin{tabular}{r}
Combined contributions\\
for $A_{r - k}$ and $A_{r - k}^R$
\end{tabular} & = &
$\left\{ \begin{array}{l}
k + {3\over 2}i \smallskip \\
{1 \over 2}k + 2i \smallskip \\
{3 \over 2}r - k  + {1 \over 2}i
\end{array} \right.$ &
$\begin{array}{rcl}
0 \leq &i&  \leq k \smallskip \\
k \leq &i&  \leq r - k \smallskip \\
r - k \leq &i& \leq r
\end{array}$
\end{tabular}\\
\end{center}
\caption{Combined contributions of runs $A$, $A^R$, $A_{r - k}$, $A_{r - k}^R$, and all of their respective shifted versions, for $1\leq s \leq 2$ for the $i$th interval when $k \leq r - k$.  Does not include values for $i > r$ because those values are symmetrical.}
\label{table:lowcontributions}
\end{table}
\label{ptablc}

\begin{table}[!hbt]
\begin{center}
\begin{tabular}{rcll}
\begin{tabular}{r}
Combined contributions\\
for $A$ and $A^R$ 
\end{tabular} & = & 
$\left\{ \begin{array}{l}
r - {1 \over 2}i \smallskip \\
{1 \over 2}r + {1 \over 2}k \smallskip \\
{1 \over 2}r + k - {1 \over 2}i
\end{array} \right.$ &
$\begin{array}{rcl}
0 \leq &i& \leq r - k \smallskip \\
r - k \leq &i& \leq k \smallskip \\
k \leq &i& \leq r
\end{array}$
\\ \\
\begin{tabular}{r}
Combined contributions\\
for $A_{r - k}$ and $A_{r - k}^R$
\end{tabular} & = &
$\left\{ \begin{array}{l}
k + {3 \over 2}i \smallskip \\
{3 \over 2}r - {1 \over 2}k \smallskip \\
{3 \over 2}r - k + {1 \over 2}i\\
\end{array} \right.$ &
$\begin{array}{rcl}
0 \leq &i&  \leq r - k \\
r - k \leq &i& \leq k\\
k \leq &i& \leq r\\
\end{array}$
\end{tabular}\\
\end{center}
\caption{Combined contributions of the same runs as in Table~\ref{table:lowcontributions}, when $k \geq r - k$.}
\label{table:highcontributions}
\end{table}
\label{ptabhc}

\begin{lemma}\label{lemma:lowestinterval}
When we weight the combined contributions from $A$ and $A^R$ twice as much as the combined contributions from $A_{r - k}$ and $A_{r - k}^R$,
the yield for all intervals is at least $2r + k$.
\end{lemma}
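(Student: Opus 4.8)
The plan is to read the weighted yield directly off Tables~\ref{table:lowcontributions} and \ref{table:highcontributions}. Write $g(i)$ for the combined contribution of $A$ and $A^R$ (together with their shifts) and $h(i)$ for the combined contribution of $A_{r-k}$ and $A_{r-k}^R$, so that weighting the former twice as heavily means the yield is $Y(i) = 2g(i) + h(i)$. The goal is to show $Y(i) \geq 2r + k$ for every subinterval index $i$. By the symmetry of the combined coverage about $i = r$ noted after CREATE-TABLE, it suffices to establish this for $0 \leq i \leq r$ and then invoke symmetry for $r \leq i \leq 2r$.

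First I would split into the two regimes already dictated by the tables, namely $k \leq r-k$ (Table~\ref{table:lowcontributions}) and $k \geq r-k$ (Table~\ref{table:highcontributions}). In each regime $g$ and $h$ are affine on each of the three listed subranges of $i$, so $Y = 2g + h$ is affine there as well; since an affine function on a closed interval is minimized at an endpoint, it suffices to simplify $Y$ on each subrange and check that it is either constant at a value $\geq 2r+k$ or monotone with the bound $2r+k$ met at its smallest endpoint. Carrying this out for $k \leq r-k$: on $0 \leq i \leq k$ the sum collapses to $2r + k + \tfrac{1}{2}i$, bottoming out at $2r+k$ when $i = 0$; on $k \leq i \leq r-k$ the $i$-terms cancel, leaving the constant $2r + \tfrac{3}{2}k \geq 2r+k$; and on $r-k \leq i \leq r$ one gets $\tfrac{5}{2}r + k - \tfrac{1}{2}i$, bottoming out at $2r+k$ when $i = r$. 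The regime $k \geq r-k$ is handled the same way: the two outer subranges again reach exactly $2r+k$ at $i = 0$ and $i = r$, while the middle subrange yields the constant $\tfrac{5}{2}r + \tfrac{1}{2}k$, which exceeds $2r+k$ precisely because $k \leq r$. Symmetry then extends the bound to $r \leq i \leq 2r$, finishing the argument.

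The computation is elementary, so there is no deep obstacle; the difficulty is purely one of bookkeeping. The most error-prone step is aligning the three subranges of $g$ with those of $h$ so that the correct affine pieces are added, and keeping the case boundary $k = r-k$ consistent between the two tables. I would take particular care at the two points where the minimum is attained, $i = 0$ and $i = r$, since these are exactly where the bound $2r+k$ is tight and hence where any sign slip in transcribing the piecewise formulas would reveal itself.
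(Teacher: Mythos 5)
Your proposal is correct and follows essentially the same route as the paper's proof: form the weighted sum $2g(i)+h(i)$ from the piecewise affine entries of Tables~\ref{table:lowcontributions} and \ref{table:highcontributions}, verify each subrange in the two regimes $k \leq r-k$ and $k \geq r-k$ (obtaining $2r+k+\tfrac{1}{2}i$, $2r+\tfrac{3}{2}k$, $\tfrac{5}{2}r+k-\tfrac{1}{2}i$, and $\tfrac{5}{2}r+\tfrac{1}{2}k$, exactly as the paper does), and dispose of $r \leq i \leq 2r$ by symmetry. The only addition is your explicit appeal to affine functions being minimized at endpoints, which the paper leaves implicit.
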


\begin{proof}
We consider yields only for subintervals $w_i$ with $0 \leq i \leq r$ because the cases with $r \leq i \leq 2r$ are symmetrical.  We first handle the case when $k \leq r - k$, consulting Table \ref{table:lowcontributions}.

If $0 \leq i \leq k$, then the yield for $w_i$ is $2r + k + i/2$, which is at least $2r + k$, since $i \geq 0$.

If $k \leq i \leq r - k$, then the yield for $w_i$ is $2r + 3k/2$, which is greater than $2r + k$.

If $r - k \leq i \leq r$, then the yield for $w_i$ is $5r/2 + k - i/2$, which is at least $2r + k$, since $i \leq r$.

We now handle the case when $k \geq r - k$, consulting Table \ref{table:highcontributions}.
The algebra for the cases when $0 \leq i \leq r-k$ and $k \leq i \leq r$ gives exactly the same results as the first and third ranges from the previous part of the proof.  If $r - k \leq i \leq k$,  then the yield for subinterval $i$ is $5r/2+ k/2$, which is at least $2r + k$, since $r \geq k$.
\end{proof}

\begin{theorem}
For speedup $s$ in the range $1 \leq s \leq 2$, algorithm SPEEDUP finds a $6\gamma/(s+1)$-approximation to repairman with unit-time windows in $O(\min\{r, m\}\Gamma(n))$ time.
\label{thm:W3for1<s<2}
\end{theorem}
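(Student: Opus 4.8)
The plan is to assemble the symbolic coverage analysis exactly as in the proof of Theorem~\ref{thm:W3for2<s<4}, but now using the four families of runs $A$, $A^R$, $A_{r-k}$, and $A_{r-k}^R$ (together with their half-period shifts $\vec{A}$, $\vec{A}^R$, $\vec{A}_{r-k}$, $\vec{A}_{r-k}^R$, which are folded into the $1/2$-accounting rather than counted separately) and the $2\!:\!1$ weighting introduced just before Lemma~\ref{lemma:lowestinterval}. First I would recall that, writing $s = (r+k)/r$, each window's trimmed interval decomposes into subintervals $w_0,\ldots,w_{2r}$ that map one-to-one onto subsets of $L$, $T$, and $E$ as in Table~\ref{table:2<s<4}. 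Since every request serviced by $R^*$ lands in exactly one such subset, the union of these subsets is precisely the collection of requests serviced by $R^*$ on untrimmed windows, so the Average Coverage Proposition will apply.

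Next I would invoke Lemma~\ref{lemma:lowestinterval}, which already does the heavy lifting: under the chosen weighting, the weighted yield of every subinterval $w_i$ is at least $2r+k$. It then remains only to identify the correct normalizing denominator. As in Theorem~\ref{thm:W3for2<s<4}, each of the four families, summed over the $r$ sets of periods, contributes $r$ to the averaging count, with its half-period shift captured by the $1/2$ entries rather than doubling the count. With the families $A$ and $A^R$ each weighted twice and $A_{r-k}$ and $A_{r-k}^R$ each weighted once, the total weight is $2r + 2r + r + r = 6r$, so the minimum average coverage is
\[
\mu = \frac{2r+k}{6r} = \frac{(2r+k)/r}{6} = \frac{s+1}{6},
\]
using the identity $s + 1 = (2r+k)/r$.

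With $\mu$ in hand, the conclusion follows the template of Theorems~\ref{theorem:s=2} and~\ref{theorem:s=3}. By the Average Coverage Proposition there is an ensemble run $\hat{R}$, defined on one of the sets of periods, with \emph{profit}$(\hat{R}) \geq \mu\cdot$\emph{profit}$(R^*)$; the optimal run on that trimming is at least as profitable, and REPAIR finds a run within a factor $\gamma$ of it, so the best run retained by SPEEDUP earns at least $\frac{1}{\gamma}\cdot\frac{s+1}{6}\,$\emph{profit}$(R^*)$, giving the claimed $6\gamma/(s+1)$ ratio. The running time $O(\min\{r,m\}\Gamma(n))$ is immediate because SPEEDUP invokes REPAIR exactly $\min\{r,m\}$ times, and Observation~\ref{observation:irrational} extends the bound from rational $s$ to all real $s$ in the range by continuity.

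I expect the only genuinely delicate point to be the bookkeeping of the denominator: one must argue, just as in the $2 \leq s \leq 4$ analysis, that a half-period shift $\vec{\cdot}$ does not double the number of sets of periods being averaged over (it is absorbed into the $1/2$ coverage entries), so that the four weighted families contribute $6r$ rather than $12r$. Everything else—the per-subinterval yield bound and the applicability of the Average Coverage Proposition—is furnished directly by Lemma~\ref{lemma:lowestinterval} and the covering argument, respectively, so no further case analysis should be required.
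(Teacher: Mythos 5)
Your proposal is correct and follows essentially the same route as the paper's own proof: it invokes Lemma~\ref{lemma:lowestinterval} for the per-subinterval yield of $2r+k$, normalizes by the same $6r$ weighted runs to obtain minimum average coverage $(s+1)/6$, and concludes via the Average Coverage Proposition, the factor $\gamma$ from REPAIR, and Observation~\ref{observation:irrational}. The only difference is expository: you spell out the identity $s+1=(2r+k)/r$ and the bookkeeping for the shifted runs in more detail than the paper does.
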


\begin{proof}
By Lemma \ref{lemma:lowestinterval}, the yield is at least $2r + k$.  Since we use two copies each of $A$ and $A^R$ and a single copy each of $A_{r - k}$ and $A_{r - k}^R$ averaged over $r$ different sets of periods, we apply the Average Coverage Proposition over $6r$ runs.  Thus, the approximation ratio is no more than
$6r\gamma /(2r + k) = 6r\gamma/((r + k) + r) = 6\gamma/(s + 1)$.  By Observation \ref{observation:irrational}, this result holds for all real values of $s$.
\end{proof}

\section{Conclusion}

As shown in \cite{Frederickson6}, the technique of trimming provides a powerful tool in attacking route-planning problems with time windows.  With trimming, it is possible to simplify the structure of many such problems and apply an ordering that allows dynamic programming to work.  For unrooted problems, the cost of this additional order is at most a constant reduction in the profit a run can earn.  In this paper we have shown that this reduction in profit can be offset, in part or in whole, by allowing some corresponding increase in speed over a hypothetical optimal benchmark.  Although we have only given the techniques needed for unit-time windows here, the extension to the more complicated case of window lengths over a range of values is forthcoming in \cite{Frederickson4}.\\

\noindent\textbf{Acknowledgments:} We thank Nelson Uhan for bringing a related reference on scheduling to our attention.

\singlespacing
\bibliography{bibliography}
\end{document}